\newcommand{\alg}[1]{\mathfrak{#1}}
\newcommand{\mc}[1]{\mathcal{#1}}
\newcommand{\Hom}{\operatorname{Hom}}
\newcommand{\id}{\operatorname{id}}
\newcommand{\ket}[1]{\ensuremath{\left|#1\right\rangle}}
\newcommand{\supp}{\operatorname{supp}}
\newcommand{\lspan}{\operatorname{span}}
\newcommand{\inv}[1]{\overline{#1}}
\newcommand{\diag}{\operatorname{diag}}
\numberwithin{equation}{section}
\theoremstyle{plain}
\newtheorem{theorem}{Theorem}
\numberwithin{theorem}{section}
\newtheorem{lemma}[theorem]{Lemma}
\newtheorem{proposition}[theorem]{Proposition}
\newtheorem{conjecture}[theorem]{Conjecture}
\newtheorem{definition}[theorem]{Definition}
\title{Kitaev's quantum double model from a local quantum physics point of view}
\date{}
\author{Pieter Naaijkens\footnote{Contribution to R. Brunetti, C. Dappiaggi, K. Fredenhagen, J. Yngvason (eds), \emph{Advances in Algebraic Quantum Field Theory}, Springer (2015).}}
\affil{Institut f{\"u}r Theoretische Physik, Leibniz Universit{\"a}t Hannover, Appelstra{\ss}e 2, 30167~Hannover, Germany, E-mail: {\rm\texttt{pieter.naaijkens@itp.uni-hannover.de}}}
\renewcommand\footnotemark{}
\begin{document}
\maketitle       
\begin{abstract}
A prominent example of a topologically ordered system is Kitaev's quantum double model $\mc{D}(G)$ for finite groups $G$ (which in particular includes $G = \mathbb{Z}_2$, the toric code). We will look at these models from the point of view of local quantum physics. In particular, we will review how in the abelian case, one can do a Doplicher-Haag-Roberts analysis to study the different superselection sectors of the model. In this way one finds that the charges are in one-to-one correspondence with the representations of $\mc{D}(G)$, and that they are in fact \emph{anyons}. Interchanging two of such anyons gives a non-trivial phase, not just a possible sign change. The case of non-abelian groups $G$ is more complicated. We outline how one could use amplimorphisms, that is, morphisms $\alg{A} \to M_n(\alg{A})$ to study the superselection structure in that case. Finally, we give a brief overview of applications of topologically ordered systems to the field of quantum computation.
\end{abstract}

\section{Introduction}
A fundamental result in local quantum physics is that in high enough dimensional space-times (or, for well enough localised particles), charged particles are either \mbox{(para-)}bosons or (para-)fermions~\cite{MR0297259}. This is no longer true in lower dimensional space-times~\cite{FRS1}. Instead of a representation of the symmetric group, representations of the braid group may be obtained from interchanging identical particles. Even though our world appears to have three spatial dimensions, many systems effectively behave like two or one dimensional systems, opening the possibility that they may be described by a low dimensional effective theory, with excitations with braided statistics. 

Interest in such systems has sparked in recent years, getting attention from the theoretical and mathematical physics communities, condensed matter physicists, quantum information theorists, and mathematicians. Part of the reason for this is the discovery of topological quantum computation (first proposed independently by Kitaev~\cite{MR1951039} and Freedman~\cite{MR1612425}), a field at the intersection of quantum theory, quantum computation and quantum topology~\cite{Wang}. One of the attracting features is that the use of topological properties of the system can lead to much better fault-tolerance with respect to (local) perturbations. An overview and candidates for systems that can be used for topological quantum computation can be found in the review~\cite{MR2443722}.

On the mathematical level the behaviour of the anyons, the quasi-particle excitations with braided statistics, can be captured by the concept of a \emph{braided fusion category}. This category encodes all information on how two anyons can combine (``fusion''), what happens if we interchange them (``braiding''), which types can exist in the system, and so on. This structure is in fact very well known in local quantum physics. One of the highlights of algebraic quantum field theory is the study of superselection sectors initiated by Doplicher, Haag and Roberts~\cite{MR0297259,MR0334742}, see also this volume~\cite{FredenhagenLQP,RehrenCFT}. This leads in a natural way to a fusion category as above. Also in the local quantum physics approach to conformal field theories, such categories appear. One can show that for \emph{rational} conformal field theories on the circle, this category is in fact \emph{modular}~\cite{MR1838752}. A modular tensor category is a special type of fusion categories that is highly non-degenerate. It is precisely such anyon models that are of interest to topological quantum computing.

Many of the models that have such anyonic excitations are what is called \emph{topologically ordered}. Topological order is a new type of order that does not fall into the Landau theory of spontaneous symmetry breaking, and there is no local order parameter distinguishing different phases. The examples known are not relativistic theories, and most of them are not described by a conformal field theory (at least, not directly). For example, the model that we will study here, Kitaev's quantum double model, is a quantum spin model, defined on a lattice. Nevertheless, one could try to apply the ideas of Doplicher, Haag and Roberts to these systems, to study the superselection sectors and the properties of the anyons. This is indeed possible~\cite{haagdouble,toricendo}.

This approach gives a strong mathematical founding to the study of anyons in topologically ordered systems. Not only does it allows us to borrow techniques from local quantum physics, it also opens up the way to the use of, for example, operator algebraic methods (see~\cite{jklindex} for such an application). A rigorous mathematical framework may also be of use in this fast-moving field, where no consensus on the right definitions for fundamental concepts is reached. A precise setting where operator algebraic methods can be used may be the right setting to study important questions such as about the stability of such systems: suppose that the Hamiltonian of the system is perturbed by a suitable perturbation, how much of the structure remains? One expects that the topological nature of the system will preserve the interesting properties (as long as, for example, the Hamiltonian remains gapped), but this is something that one wants to prove rigorously (however, see e.g.~\cite{BravyiHastings,BravyiHastingsMichalakis} for results in this direction).

The remainder part of this contribution is outlined as follows. First, we discuss the basic idea behind topological order, and introduce the quantum double model in the setting of local quantum physics. Section~\ref{sec:haagduality} discusses the technical property of Haag duality in this context. We then come to the main part: an overview of the DHR-type analysis of the superselection sectors in the quantum double model. There we also relate it to the theory of (modular) tensor categories. This construction only works for the \emph{abelian} quantum double models. Therefore, in the following section (which contains new results) we outline how one could use so-called amplimorphisms to study the non-abelian case. Finally, in the last section we briefly comment on applications to quantum computing.

\section{Topological order}
Around the late '80s, it was found that there are states of matter that do not fall into the Landau theory of spontaneous symmetry breaking. One of the first papers discussing this is~\cite{PhysRevLett.59.2095}. Such states were called \emph{topologically ordered}, because some of their properties depend on the topology of the manifold on which the system is defined~\cite{PhysRevB.40.7387}. As a particular example one can think of two dimensional systems defined on a surface. For topologically ordered systems, the ground state degeneracy typically depends on the genus of the surface.

Around the change of the millennium, quantum information theorists started to take an increased interest in topologically ordered systems. One of the main reasons is that it was realised that the topological properties of such systems might be useful for quantum computation. This was first suggested by Kitaev~\cite{MR1951039}, who introduced a simple quantum system, the \emph{toric code}, that could act as a quantum memory, storing a pair of qubits. A qubit is the quantum analogue of a bit in a classical computer. Mathematically it is nothing but a copy of the Hilbert space $\mathbb{C}^2$, whose basis is usually denoted by $\ket{0}$ and $\ket{1}$. Although the basics of quantum computing are not difficult to explain, we refer to the standard textbook~\cite{MR1796805} for more information. In any case, the idea is to embed the two qubit Hilbert space $\mathbb{C}^2 \otimes \mathbb{C}^2$ into the physical Hilbert space $\mathcal{H}$. We can then initialize our system in one of the states. The point of a good quantum memory is then that after some time, we should be able to recover the state (at least through the gathering of measurement statistics). In practice, however, quantum systems are not completely isolated, and influences from the environment might drive the system into a different state.

This is where topologically ordered systems come into play. A nice feature of topologically ordered systems is that there is no local\footnote{For finite systems, ``local'' means small compared to the system size.} order parameter. Hence, local observables cannot distinguish between different ground states, and on the other hand local operations cannot put the system from one ground state into another. Hence the idea is to encode information in the ground space, since this will be robust, at least against local perturbations. Of course, nature does not make it easy for us, and there are some more subtle points to being a good quantum memory. For example, because the information is stored non-locally, accessing it or acting on it also necessarily requires non-local operations, which may be difficult to implement. In addition, although local perturbations of the system do not destroy the information, this requires the use of an error correcting protocol. One should be able to do this fast enough, to prevent the errors from spreading out over the system to a \emph{non-local} error, which \emph{does} corrupt the information in the memory. At least for a wide class of 2D systems, it turns out that these systems by themselves are not good memories, so one indeed has to do some error correction~\cite{AlickiThermal,BravyiTerhal,KayColbeck,LandonCarinal-Poulin}.

Another interesting aspect of topologically ordered systems is that they generally have anyonic excitations. That is, excitations of such systems behave like quasi-particles, with anyonic or braided statistics: the global state of the system changes non-trivially (that is, differently from just a sign change) under the interchange of two such quasi-particles. It is this property that we will focus on here. In particular, we will outline how the Doplicher-Haag-Roberts analysis of superselection sectors in algebraic quantum field theory can be translated to the setting of topologically ordered systems, and how this can be used to recover the statistics of the anyons. To illustrate how this works we consider Kitaev's \emph{toric code}, and more generally his quantum double model~\cite{MR1951039}, which is the prototypical example of a topologically ordered system. 

Since the toric code is relatively simple and nevertheless has many interesting features, one can use it as a testbed to see if ideas from algebraic quantum field theory can be applied to this class of models. This indeed turns out to work very well, and allows one to explicitly realize many of the fundamental concepts in the theory of superselection sectors. This shows that the concept can be transferred from relativistic theories to non-relativistic condensed matter systems. Although the explicit constructions we present here depend on the specific knowledge of the toric code, these features are common to a range of topologically ordered systems, and the theory can in principle be applied to them as well.

To make the connection between the toric code and the theory of superselection sectors, it is convenient to depart from the usual setting of Kitaev's model and define it on an \emph{infinite} 2D lattice, instead of on some compact surface of genus $g$. This is most conveniently done in the operator algebraic framework, were one assigns algebras of local observables to finite regions of space~\cite{MR887100,MR1441540}. In particular, there is a clear distinction between local and global observables, just like in local quantum physics. That is, one does not have to keep track of the system size. Concretely, we consider the lattice $\mathbb{Z}^2$, and write $\Gamma$ for the set of \emph{edges} or \emph{bonds} between them. We give each edge an orientation. For simplicity, all vertical edges are assumed to point upwards, the horizontal edges will point to the right (see figure~\ref{fig:ribbon}). Now let $G$ be a finite dimensional group. Then to each edge we assign a ``$G$-spin''. This means that there is a quantum system at each edge, with corresponding Hilbert space $\mathcal{H}_e := \mathbb{C}[G]$, where the right hand side is the group algebra of $G$, seen as a Hilbert space in the natural way.

The local operators that act on an edge $e$ are $\alg{B}(\mathcal{H}_e) \cong M_{|G|}(\mathbb{C})$. If $\Lambda \in \mathcal{P}_f(\Gamma)$, the set of finite subsets of $\Gamma$, then the local observables associated to $\Lambda$ are defined as $\alg{A}(\Lambda) := \bigotimes_{e \in \Lambda}$. If $\Lambda_1 \subset \Lambda_2$ there is a natural inclusion of the corresponding algebras, by tensoring with the identity operator at the sites of $\Lambda_2$ that are not in $\Lambda_1$. Hence we are in the familiar setting of local quantum physics, where we have a net $\Lambda \mapsto \alg{A}(\Lambda)$ of observables associated to bounded regions (cf.~\cite{FredenhagenLQP}). It is not difficult to see that this net is local, that is, $[\alg{A}(\Lambda_1), \alg{A}(\Lambda_2)] = \{0\}$ if $\Lambda_1 \cap \Lambda_2 = \emptyset$.

We then proceed in the same way as for relativistic systems: the local operators are defined as $\alg{A}_{loc} = \bigcup_{\Lambda \in \mc{P}_f(\Gamma)} \alg{A}(\Lambda)$. Note that $\alg{A}_{loc}$ is a $*$-algebra in a natural way, and that the standard operator norm on matrix algebras induces a $C^*$-norm on $\alg{A}_{loc}$. The quasilocal algebra $\alg{A}$ is the completion of $\alg{A}_{loc}$ with respect to this norm. If $\Lambda \subset \Gamma$ is an arbitrary subset (not necessarily finite), we then define
\[
\alg{A}(\Lambda) = \overline{ \bigcup_{\Lambda_f \in \mc{P}_f(\Gamma \cap \Lambda)} \alg{A}(\Lambda_f)}^{\| \cdot \|}
\]
as the algebra of all observables that can be measured inside the region $\Lambda$.

Finally, note that there is a natural translation symmetry on the system, although we do not need this for our purposes, except for picking out a translationally invariant ground state. This touches upon one of the fundamental differences between relativistic quantum field theory and discrete systems. For the discrete systems there is no Lorentz group or Poincar{\'e} group, which play an important role in relativistic theories. Nevertheless, one can mimic some of these concepts using the translation symmetries mentioned, together with locality estimates for local observables evolved under the time evolution of the system~\cite{MR2217299}. For our purposes these aspects play no role.

\subsection{The quantum double model}\label{sec:qdouble}
So far the description has been completely general. To consider a specific system one has to specify the dynamics of the model. Note that because the model is infinite, the corresponding Hamiltonian generally is unbounded, and hence not an element of $\alg{A}$. Fortunately, in practice there is a lot more structure, and the dynamics are local in a suitable sense. More concretely, for each $\Lambda \in \mc{P}_f(\Gamma)$ one can define a self-adjoint $H_\Lambda$, describing the interactions within that region. Heuristically, the dynamic evolution of an observable $A$ is then obtained as $\alpha_t(A) := \lim_{\Lambda \to \infty} e^{i t H_\Lambda} A e^{-i t H_\Lambda}$, where $\Lambda \to \infty$ means that we take an increasing sequence of finite sets $\Lambda$ that exhaust $\Gamma$. If the strength of the interaction decays fast enough, this expression converges and one obtains a strongly continuous one-parameter group $t \mapsto \alpha_t$ of automorphisms~\cite{MR1441540}.

Once the dynamics are defined one can talk about ground state. The most convenient way to do this is in terms of (generally unbounded) $*$-derivations $\delta$, which are obtained as the generator of an automorphism group $t \mapsto \alpha_t$. In the cases of interest to us these are simply obtained as (the closure of) $\delta: \alg{A}_{loc} \to \alg{A}_{loc}$, with
\[
	\delta(A) := i \lim_{\Lambda \to \infty} [ H_\Lambda, A].
\]
Because the interactions are local and of finite range in the cases of interest, this converges, and $\alpha_t(A) = e^{t \delta}(A)$. A state $\omega_0$ on $\alg{A}$ is then called a \emph{ground state} if $-i \omega_0(A^* \delta(A)) \geq 0$ for all $A \in D(\delta)$. This (at first sight) perhaps strange looking condition can be interpreted as a positivity of the energy condition. Such a state is automatically invariant, $\omega_0 \circ \alpha_t = \omega_0$, so that in the GNS representation the dynamics are implemented by a strongly continuous group of unitaries $t \mapsto U(t)$. By Stone's theorem one obtains a Hamiltonian $H$, which can be shown to be positive using the condition mentioned above~\cite{MR1441540}.

There is some extra notation that has to be introduced to define the dynamics in Kitaev's quantum double model. A combination $s = (v,f)$ of a vertex $v$ and a choice of an adjacent face $f$ is called a \emph{site}. To each site we associate the following operators. Let $g \in G$. Note that for a vertex $v$, there are four edges that start or end in $v$. A basis for the Hilbert space of this four sites is given by $\ket{g_1} \otimes \cdots \otimes \ket{g_4}$, with $g_i \in G$. The operator $A^g_s$ acts on this basis vector by multiplying $g_i$ from the \emph{left} with $g$ if the corresponding edge points away from the vertex $v$, and $g_i$ gets sent to $g_i g^{-1}$ if the edge points inwards. If $h \in G$, a projection $B^h_s$ is defined as follows. First, list the edges around the face $f$ in counter-clockwise order, starting in $v$. On the basis labelled by group elements of the Hilbert space corresponding to these four edges, $B_s^h$ acts as the identity if $\sigma(g_1) \cdots \sigma(g_4) = h$, and zero otherwise. Here $\sigma(g) = g$ if the corresponding edge is in the same direction as the counter-clockwise labelling, and $\sigma(g) = g^{-1}$ otherwise. Pictorially the operators can be depicted as follows:\\
\begin{center}
	\includegraphics{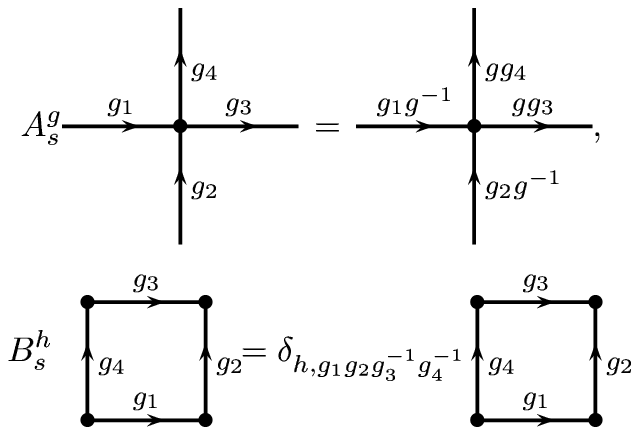}
\end{center}
Sometimes one says that $B^h_s$ projects on the states with \emph{flux} $h$ through the face $f$. Indeed, the structure is very similar to that of lattice gauge theory~\cite{Oeckl}, and the quantum double model can in fact also be interpreted in such terms~\cite{MR1951039}.

The operators $A_s$ and $B_s$ are also called \emph{star} and \emph{plaquette} operators respectively, and satisfy the following algebraic relations, which can easily be verified:
\begin{equation}
	A_s^g A_s^h = A_s^{gh}, \quad B^g_s B^h_s = \delta_{g,h} B_s^h, \quad A^g_s B^h_s = B^{g h g^{-1}}_s A^g_s, \quad (A^g_s)^* = A_s^{g^{-1}}.
	\label{eq:starplaq}
\end{equation}
Star and plaquette operators acting on different sites $s$ and $s'$ always commute. At this point the name ``quantum double'' can be explained: the algebraic relations above are exactly those of the quantum double $\mathcal{D}(G)$ of the Hopf algebra $\mathbb{C}[G]$ (see e.g.~\cite{MR1321145} for an introduction). That is, at each site there is an action of the Hopf algebra which acts via the star and plaquette operators. The representation theory of $\mathcal{D}(G)$ plays an important role in the analysis of the superselection sectors, as we will see below.

We are now in a position to define the dynamics of the system. First, write $A_s = \frac{1}{|G|} \sum_{g \in G} A^g_s$ for the average of the star operators, and $B_s := B^e_s$. Then $[A_s, B_s] = 0$. The local Hamiltonians are then defined as
\begin{equation}
	\label{eq:hamiltonian}
	H_\Lambda = - \sum_{s \in \Lambda} A_s - \sum_{s \in \Lambda} B_s,
\end{equation}
where in the first sum the summation is over all sites $s = (v,f)$, such that the edges starting or ending at $v$ are contained in $\Lambda$. Similarly, the second summation is over all sites such that the edges of the face $f$ are all contained in $\Lambda$. The case $G=\mathbb{Z}_2$ is called the \emph{toric code}. These local Hamiltonians generate a one-parameter group of automorphisms $\alpha_t$ as described above. It turns out that there is a unique \emph{translationally invariant} ground state~\cite{MR2345476,phdthesis}.\footnote{Bruno Nachtergaele pointed out that the remark in~\cite{toricendo} is in fact false, and there are additional (non-translationally invariant) ground states. Indeed, the charged states constructed in~\cite{toricendo} have dynamics implemented by a positive Hamiltonian.}
\begin{proposition}
There is a unique translationally invariant ground state $\omega_0$. This state is pure and is the unique state on $\alg{A}$ which satisfies $\omega_0(A_s) = \omega_0(B_s) = 1$ for all star and plaquette operators $A_s$ and $B_s$.
\end{proposition}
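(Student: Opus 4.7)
The plan is to split the proposition into four claims which I would prove in the following order: (A) uniqueness of any state satisfying the characterization $\omega(A_s)=\omega(B_s)=1$; (B) existence of such a state $\omega_0$, which is automatically a ground state; (C) purity of $\omega_0$; and (D) uniqueness of $\omega_0$ among \emph{translation-invariant} ground states. Claims (A)--(C) will follow from standard frustration-free arguments, while (D) is the technically delicate step, since the footnote already signals that non-translation-invariant ground states exist.

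For (A), the core observation is that $\omega(A_s)=1$ combined with $A_s$ being a projection gives $\omega(\mathbf{1}-A_s)=0$, hence $\pi_\omega(A_s)\Omega_\omega=\Omega_\omega$ in the GNS representation; Cauchy--Schwarz then yields $\omega(XA_s)=\omega(A_sX)=\omega(X)$ for all $X\in\alg{A}$, and similarly for $B_s$. For any local $X\in\alg{A}(\Lambda)$ one may thus replace $X$ by $\Pi X\Pi$, where $\Pi$ is a product of all star and plaquette projections with support in a sufficiently large region containing $\Lambda$, without changing $\omega(X)$. Because the plane is simply connected, the joint $+1$ eigenspace onto which $\Pi$ projects corresponds to the trivial gauge-invariant flat-connection sector; the value $\omega(\Pi X\Pi)$ is therefore pinned down to a single number depending only on $X$, so $\omega$ is determined on $\alg{A}_{\mathrm{loc}}$ and by continuity on $\alg{A}$.

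For (B) I would take $\omega_0$ to be a weak-$*$ limit of the normalised finite-volume ground-state density matrices of $H_\Lambda$ (or invoke the explicit construction in~\cite{MR2345476,phdthesis}); the conditions $\omega_0(A_s)=\omega_0(B_s)=1$ pass to the limit by locality and boundedness of $A_s,B_s$. That $\omega_0$ is a ground state follows from frustration-freeness: up to an additive constant $H_\Lambda$ is a sum of commuting positive projections $\mathbf{1}-A_s$, $\mathbf{1}-B_s$ each annihilated by $\omega_0$, so expanding $[H_\Lambda,A]$ and using the multiplier relations from (A) gives $-i\omega_0(A^{*}\delta(A))\ge 0$ termwise. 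Claim (C) is then immediate from (A): if $\omega_0=\lambda\omega_1+(1-\lambda)\omega_2$ with $\lambda\in(0,1)$, then because $1$ is extremal in $[0,1]$ both $\omega_i$ satisfy $\omega_i(A_s)=\omega_i(B_s)=1$, so $\omega_1=\omega_2=\omega_0$. Translation invariance of $\omega_0$ is automatic since the characterizing conditions are translation invariant.

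For (D), let $\omega$ be any translation-invariant ground state. Since $\alpha_t(A_s)=A_s$ and $\alpha_t(B_s)=B_s$, the projections $\pi_\omega(A_s),\pi_\omega(B_s)$ commute with the positive GNS Hamiltonian $H_\omega$ and preserve $\ker H_\omega$; translation invariance then forces $c:=\omega(A_s)$ and $d:=\omega(B_s)$ to be constants independent of $s$, so the task reduces to showing $c=d=1$. The natural strategy is to exhibit, for any candidate translation-invariant $\omega$ with $c<1$ or $d<1$, a suitable local test element $A\in D(\delta)$ whose energetic cost violates the ground-state inequality $-i\omega(A^{*}\delta(A))\ge 0$ — exploiting that, in a translation-invariant state, a local charge-lowering operator yields a bounded energy gain while the translation-invariant ``defect density'' $1-c$ supplies an extensive reservoir of violated projections to act on. With $c=d=1$ in hand, (A) forces $\omega=\omega_0$. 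This spectral/energetic comparison is the main obstacle: ground-state positivity alone is consistent with the charged states mentioned in the footnote, so translation invariance must be used essentially, and constructing the right test operator is the step where I would expect to spend the most care.
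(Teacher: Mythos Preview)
The paper does not itself prove this proposition; it states the result and cites~\cite{MR2345476,phdthesis}. So there is no in-text argument to compare against, and your proposal stands or falls on its own.

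Parts (A)--(C) are correct and are the standard frustration-free arguments one finds in the cited references. Part (D) is, as you correctly identify, where the real content lies. Your intuition --- a local charge-lowering operator versus an extensive defect density --- is the right one, but the sketch stops short of a proof, and a single naive test operator will not suffice: for instance a bare $\sigma^z_e$ in the $\mathbb{Z}_2$ case flips \emph{two} stars, and the ground-state inequality applied to it only yields $c\ge 1/2$. One way to complete the argument: take $A=Z_\gamma(I-A_{s_1})(I-A_{s_2})$, where $Z_\gamma$ is a string along a path from $s_1$ to $s_2$ anticommuting with precisely those two star projections. A short computation gives $A^*[H_\Lambda,A]=-2(I-A_{s_1})(I-A_{s_2})$, so the ground-state inequality forces $\omega\bigl((I-A_{s_1})(I-A_{s_2})\bigr)=0$ for \emph{every} pair of sites --- no translation invariance needed yet, and this is consistent with the single-charge ground states of the footnote. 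Translation invariance now enters: with $S=\sum_{i=1}^n(I-A_{s_i})$ over $n$ distinct sites one has $\omega(S^2)=\omega(S)=n(1-c)$ from the vanishing cross terms, hence $\operatorname{Var}_\omega(S)=n(1-c)\bigl(1-n(1-c)\bigr)\ge 0$, and letting $n\to\infty$ forces $c=1$. The dual string handles $d$. So your line does go through, but the missing structural ingredient is this two-step mechanism: first kill the two-point defect correlations via projected string operators, then convert translation invariance into $c=1$ by a variance bound.
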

This ground state will be the starting point from the analysis: the different superselection sectors will be realized, roughly speaking, by creating single excitations of this ground state. The GNS representation corresponding to the ground state representation from the proposition will be denoted by $(\pi_0, \Omega, \mc{H}_0)$. Since $\alg{A}$ is a UHF (and hence simple) algebra, $\pi_0$ is automatically injective, and we will often identify $\pi_0(A)$ with $A$.

To understand how this works it is necessary to understand what we mean by an excitation, and how these can be obtained. From the proposition above it follows that in the ground state representation, $A_s \Omega = B_s \Omega = \Omega$. We can interpret this as constraints, and violating some of these constraints carry an energy penalty, according to the local Hamiltonians~\eqref{eq:hamiltonian}. We will interpret the violation of such a constraint as a (quasi-)particle sitting at the site $s$ (if the state is an eigenstate of $H_\Lambda$).\footnote{Note that we disregard any momentum variables. The pairs of excitations form bound states, so the excitations mentioned here are not quite the single-particle excitations one encounters in scattering theory.} So the excitations live at sites of the lattice. 

\begin{figure}
	\begin{center}
	\includegraphics{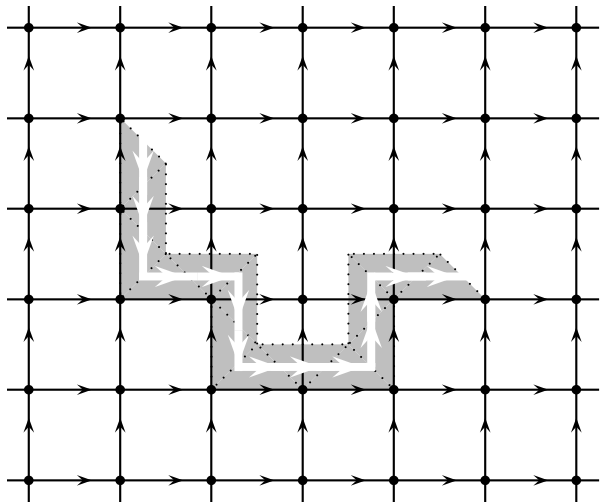}
	\end{center}
	\caption{The lattice describing the system, together with a ribbon between two sites. Note that the ribbon has an orientation, pointing from start to end.}
	\label{fig:ribbon}
\end{figure}

The excitations can be obtained by acting with so-called \emph{ribbon operators} on the ground state. We only list their main properties here. A more complete treatment and proofs can be found in~\cite{PhysRevB.78.115421}. A \emph{ribbon} is, roughly speaking, a continuous path of triangles (see figure~\ref{fig:ribbon} for an example). To each such ribbon $\xi$ and pair $g,h$ of group elements one can assign a ribbon operator $F^{h,g}_\xi$. These operators act only on the edges that are part of the ribbon (or cross any of the triangles). They satisfy the following algebraic relations:
\[
F_{\xi}^{h_1, g_1} F_{\xi}^{h_2, g_2} = \delta_{g_1, g_2} F^{h_1 h_2, g_1}_\xi, \quad (F^{h,g}_{\xi})^* = F^{h^{-1}, g}_{\xi}, \quad \sum_{g \in G} F^{e,g}_{\xi} = I.
\]
Note that the ribbons carry an orientation. This allows us to talk about the \emph{starting} and \emph{ending} sites of the ribbon. Let us denote them by $s_1$ and $s_2$ for the moment. An important property of the ribbon operators is that they commute with any star or plaquette operator, except for those at the ending sites of the ribbon. There we have the following commutation relations:
\begin{equation}
	\label{eq:comms}
	A^k_{s_1} F^{h,g}_\xi = F^{k h k^{-1}, kg}_\xi A^k_{s_1}, \quad B^k_{s_1} F^{h,g}_\xi = F^{h,g}_\xi B^{kh}_{s_1}.
\end{equation}
Similarly, for the ending site 
\begin{equation}
	\label{eq:comme}
	A^k_{s_2} F^{h,g}_\xi = F^{h , g k^{-1}}_\xi A^k_{s_2}, \quad B^k_{s_1} F^{h,g}_\xi = F^{h,g}_\xi B^{g^{-1} h^{-1} g k}_{s_2}.
\end{equation}
From these commutation relations it follows that the vector state $F^{h,g}_\xi \Omega$ violates some of the constraints the Hamiltonian gives for the ground state, namely precisely those at the start- and ending sites. Hence this vector state can be thought of as an excited state of the system. These excited states satisfy the following important property, whose proof we will omit (see e.g.~\cite{PhysRevB.78.115421,haagdouble} for the proof).
\begin{lemma}
	Let $\xi_1$ and $\xi_2$ be two ribbons with the same starting and ending sites. Then $F_{\xi_1}^{h,g} \Omega = F_{\xi_2}^{h,g} \Omega$.
\end{lemma}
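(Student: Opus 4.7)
The plan is to reduce the claim to the verification that each elementary local deformation of a ribbon preserves the vector $F^{h,g}_\xi \Omega$, and that any two ribbons with common endpoints are related by a finite sequence of such moves.

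First, I would invoke the concatenation law for ribbon operators established in~\cite{PhysRevB.78.115421}: if $\xi = \xi' \cdot \xi''$ is a composition of ribbons meeting at an intermediate site, then $F^{h,g}_{\xi}$ decomposes as a finite sum of products of the form $F^{h_1,g_1}_{\xi'} F^{h_2,g_2}_{\xi''}$, with the indices determined by $(h,g)$ and a summation over $G$. This localizes the problem: if $\xi_1$ and $\xi_2$ agree outside a small region, the common initial and final segments factor out and one can focus on the ``bubble'' where they differ.

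Second, I would argue combinatorially that any two ribbons $\xi_1,\xi_2$ with shared endpoints are connected by a finite sequence of \emph{elementary moves}, each of which pushes a short portion of the ribbon across either a single vertex or a single face of the lattice. This is the discrete version of the statement that two paths in a simply connected planar region are homotopic, applied to the cell complex of triangles on which ribbons live. It thus suffices to show that each elementary move leaves $F^{h,g}_\xi \Omega$ invariant.

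Third, I would analyse the two types of elementary moves. Pushing the ribbon across a vertex $v$ (respectively a face $f$) changes a short segment by a closed loop encircling $v$ (respectively $f$); using the concatenation formula together with the commutation relations~\eqref{eq:comms}--\eqref{eq:comme} at the interior site, the resulting ``difference operator'' can be written in terms of star operators $A^k_v$ (respectively plaquette operators $B^k_f$) at that site. Because $A_v = \frac{1}{|G|}\sum_{k} A^k_v$ is a projection and $A_v\Omega=\Omega$, one has $A^k_v\Omega=\Omega$ for every $k\in G$; and because $B^e_f\Omega = \Omega$ with $B^k_f\Omega=0$ for $k\neq e$, only the trivial-flux component of the face bubble contributes. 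In both cases the bubble acts as the identity on $\Omega$, so $F^{h,g}_{\xi_1}\Omega = F^{h,g}_{\xi_2}\Omega$.

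The main obstacle is the bookkeeping in this last step: ribbon operators are not unitary and their $(h,g)$-labels obey a group-convolution product, so one has to check that the surviving component of each closed bubble is precisely the component fixed by the relevant $A^k_v$ or $B^e_f$. The underlying geometric picture—that $F^{h,g}_\xi$ implements a discrete parallel transport which, on a gauge-invariant vacuum, depends only on the homotopy class of $\xi$ relative to its endpoints—makes the result expected; the detailed triangle-by-triangle verification is the content of the argument in~\cite{PhysRevB.78.115421}.
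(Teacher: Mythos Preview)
The paper does not actually prove this lemma; it explicitly omits the proof and refers the reader to~\cite{PhysRevB.78.115421,haagdouble}. Your sketch is correct and is essentially the argument found in those references: reduce to elementary ribbon deformations across a single star or plaquette via the concatenation formula, and then use $A^k_v\Omega=\Omega$ and $B^k_f\Omega=\delta_{k,e}\Omega$ to see that each such move acts trivially on $\Omega$. Your justification that $A^k_v\Omega=\Omega$ for all $k$ (from $A_v\Omega=\Omega$ together with unitarity of each $A^k_v$) and that only the trivial-flux component survives on the plaquette side is exactly the mechanism used there; the remaining work is, as you say, the combinatorial bookkeeping of the $(h,g)$-labels under concatenation, which is carried out in~\cite{PhysRevB.78.115421}.
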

In other words, the state does not depend on the ribbon itself, only on the endpoints. This already shows the topological nature of the system.

Now suppose that $\xi$ is a non-trivial ribbon. It can then be written as the concatenation of ribbons $\xi_1$ and $\xi_2$. The ribbon operator $\xi$ is related to the smaller ribbons by means of the following recursion relation (which indeed can also be used to recursively define the ribbon operators):
\begin{equation}
	F_{\xi}^{h,g} = \sum_{k \in G} F_{\xi_1}^{h,k} F^{\inv{k} h k, \inv{k} g}_{\xi_2}.
	\label{eq:ribbonop}
\end{equation}
This feature will be important later.

The basis $F^{h,g}_\xi$ of ribbon operators is not always the most convenient one. Recall that for each site $s$ there is an action of $\mc{D}(G)$ on the Hilbert space. One can now consider the vector space $V = \lspan \{ F_\xi^{h,g}\Omega : h,g \in G \}$ for a fixed ribbon $\xi$. By equations~\eqref{eq:comms} this space has a natural interpretation as a left $\mc{D}(G)$-module, because $A^g_s \Omega = \Omega$ and $B^h_s \Omega = \delta_{h,e} \Omega$. Similarly, equations~\eqref{eq:comme} also give it the structure of a $\mc{D}(G)$-module, induced by the action of the star and plaquette operators at the ending site.\footnote{This is in fact the contragradient module of the representation at the starting site.} This suggests that it may be useful to decompose this space into irreducible subspaces, and find a new basis of the ribbon operators accordingly. This indeed turns out to be a good idea. 

The representation theory of $\mc{D}(G)$ is well understood: constructions of all irreducible representations can be found in~\cite{MR1797619,DijkgraafOrbifold}. They are in one-to-one correspondence with pairs $(C, \rho)$, where $C$ is a conjugacy class of $G$ and $\rho$ an irreducible representation of $Z_G(c)$, where $Z_G(c)$ is the centralizer in $G$ of a representative $c$ of the conjugacy class. It turns out to be convenient to label the elements of the conjugacy class in a particular way (compare \cite{PhysRevB.78.115421}). First of all, let $C$ be a conjugacy class of $G$. Choose a representative $r \in C$, and let $Z_G(r)$ be the centraliser of $r$ in $G$. We label the elements of $C$ by $c_1, \dots, c_n$, where $n = |C|$. Then there are $q_i$ such that $c_i = q_i r \inv{q}_i$, where we used the notation $\inv{g}$ for the inverse of $g$, to improve readability. The set $\{q_i\}$ is denoted by $Q_C$. Note that each $g \in G$ can be uniquely written as $g = q_i n$ for some $q_i \in Q_C$ and $n \in Z_G(r)$.

With this notation we can choose a new basis of the ribbon operators associated to a ribbon $\xi$. Suppose that $\rho$ is a unitary representation of $Z_G(r)$. We regard each $\rho(g)$ as a unitary matrix, and write $\rho(g)_{jj'}$ for the corresponding matrix elements in the standard basis. Let $i,i' = 1, \dots n$ and $j, j' = 1, \dots \dim(\rho)$. We then define  
\[
F^{C\rho;i,i',j,j'}_\xi = \sum_{g \in Z_G(r)} \overline{\rho}_{jj'}(g) F_\xi^{\overline{c}_i, q_i g \overline{q}_{i'}}.
\]
As $C$ runs over all conjugacy classes of $G$, and $\rho$ runs over the corresponding irreducible representations of the centralisers, these operators form a basis of the space spanned by $F^{h,g}_\xi$. We refer to~\cite{PhysRevB.78.115421} for a proof. In essence, the point is that the space of operators is decomposed into subspaces transforming according to some irreducible representation of $\mc{D}(G)$. For convenience we sometimes drop the notation for $C$ and $\rho$ when these are implied by the context, and write $I = (i,j)$, $J = (i', j')$ for the pairs of indices: $F_{\xi}^{IJ}$.

If the group $G$ is abelian the notation can be simplified, since each conjugacy class consists of exactly one element, and clearly the centralizer $Z_G(r)$ is always equal to $G$. Hence we can label the basis by a pair $(\omega,c)$, where $\omega$ is a character of $G$ and $c \in G$. It can be checked that in that case the corresponding ribbon operators $F^{\omega,c}_\xi$ are unitaries and that $F_\xi^{\omega,c} F_\xi^{\chi,d} = F^{\omega \chi, cd}_\xi$. There are many more useful relations, describing for example the commutation relations between two ribbon operators acting on crossing ribbons, these can be found in~\cite{PhysRevB.78.115421}.

\section{Haag duality}\label{sec:haagduality}
Just as in the Doplicher-Haag-Roberts theory of superselection sectors, Haag duality plays a very useful role. Let us first recall what Haag duality actually is: it is an commutation property of von Neumann algebras generated by algebras of observables in the ground state representation. More precisely, let $\Lambda$ be a cone-like region.\footnote{The precise shape is not so important, but see~\cite{haagdouble} for a precise definition. Heuristically, one can take a point in the lattice, draw two semi-infinite lines from this point, and consider all edges that either intersect these lines, or lie in the convex set bounded by the lines.} Then $\pi(\alg{A}(\Lambda))''$ is the von Neumann algebra generated by all (quasi-)local observables localised in $\Lambda$. Note that, by locality, these observables commute with any local observable localised \emph{outside} the cone, that is, in $\Lambda^c$. In other words, $\pi_0(\alg{A}(\Lambda))'' \subset \pi_0(\alg{A}(\Lambda^c)'$, where $\Lambda^c$ is the complement of $\Lambda$ in $\Gamma$. Haag duality says that these sets are actually equal. It means that we cannot add operators to the cone algebra $\pi_0(\alg{A}(\Lambda))''$ without violating locality. This property is fulfilled for Kitaev's model for abelian groups $G$:

\begin{theorem}
Let $G$ be a finite abelian group and let $\Lambda$ be a cone. Write $\pi_0$ for the translational invariant ground state representation of the quantum double model for $G$. Then Haag duality holds:
\[
	\pi_0(\alg{A}(\Lambda))'' = \pi_0(\alg{A}(\Lambda^c))',
\]
where the prime denotes the commutant in the set of all bounded operators.
\end{theorem}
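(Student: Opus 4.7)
The easy inclusion $\pi_0(\alg{A}(\Lambda))'' \subseteq \pi_0(\alg{A}(\Lambda^c))'$ follows at once from locality: for disjoint finite subsets of edges the corresponding local algebras commute, and approximating arbitrary elements of $\alg{A}(\Lambda^c)$ by local ones and then taking the weak closure preserves commutation. All the content of the theorem lies in the reverse inclusion, which is where I would concentrate my effort.

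The plan is to compute $\pi_0(\alg{A}(\Lambda^c))'$ concretely using the algebra of ribbon operators, and then match the answer with $\pi_0(\alg{A}(\Lambda))''$. The abelian hypothesis gives a lot of leverage: the operators $F^{\omega,c}_\xi$ are unitary and multiply via $F^{\omega,c}_\xi F^{\chi,d}_\xi = F^{\omega\chi, cd}_\xi$, so the ribbon algebra on a fixed ribbon is a maximal abelian subalgebra of its matrix algebra of endomorphisms. Moreover, by the lemma that $F^{h,g}_\xi \Omega$ depends only on the endpoints of $\xi$, any ribbon with both endpoints in $\Lambda^c$ can be deformed into one lying entirely in $\Lambda^c$, so the corresponding ribbon operators, together with the star and plaquette projections at sites of $\Lambda^c$, are already available inside $\pi_0(\alg{A}(\Lambda^c))''$.

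The second step is to promote this to a genuine tensor-product factorisation of $\mc{H}_0$ along the boundary of $\Lambda$. I would aim to construct a unitary $U$ on $\mc{H}_0$ which converts the commuting pair $(\pi_0(\alg{A}(\Lambda))'', \pi_0(\alg{A}(\Lambda^c))'')$ into the form $(N \otimes \mathbb{C}, \mathbb{C} \otimes N')$ for a suitable type~I factor $N$, at which point Haag duality becomes the trivial identity $(N \otimes \mathbb{C})' = \mathbb{C} \otimes \alg{B}(\mc{H}_0^{\text{out}})$. The building blocks of $U$ would be the recursion relation \eqref{eq:ribbonop}, used to cut every ribbon crossing the boundary of $\Lambda$ into an inside piece and an outside piece, combined with the projections fixing the values of $\sigma(g_1)\cdots \sigma(g_4)$ along the boundary (that is, the plaquette constraints) and the averaged star constraints.

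The main obstacle will be precisely the ribbons that cross $\partial \Lambda$. The recursion \eqref{eq:ribbonop} expresses $F^{h,g}_\xi$ as $\sum_{k} F^{h,k}_{\xi_1} F^{\inv{k} h k, \inv{k} g}_{\xi_2}$, which is \emph{not} a pure tensor across the cut; for abelian $G$ the conjugation $\inv{k} h k = h$ becomes trivial, and the technical heart of the argument is to show that the residual sum over $k$ is absorbed exactly by the star/plaquette constraints living on the boundary strip, with nothing left over to contribute new operators to the commutant. Verifying this uniformly as the cut is moved, and handling the corner sites of the cone, is the delicate combinatorial core. The cone-shape of $\Lambda$ is used essentially at this point, since it guarantees that any two sites in $\Lambda^c$ can be joined by a ribbon inside $\Lambda^c$; for annulus-like regions this fails and one indeed expects Haag duality to break, so the hypothesis cannot be relaxed.
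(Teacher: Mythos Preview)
Your proposal takes a genuinely different route from the paper, and it contains a real gap. The paper does not attempt to factorise $\mc{H}_0$ as a tensor product across $\partial\Lambda$. Instead it builds an explicit description of the subspace $\mc{H}_\Lambda\subset\mc{H}_0$ spanned by states with all excitations inside $\Lambda$, and then shows that the real-linear span $\bigl(\pi_0(\alg{A}(\Lambda))_{sa}+i\,\pi_0(\alg{A}(\Lambda^c))_{sa}\bigr)\Omega$ is dense in $\mc{H}_\Lambda$ (and symmetrically with the roles of $\Lambda$ and $\Lambda^c$ interchanged). By the Rieffel--Van Daele commutation theorem, this density condition for a cyclic vector is \emph{equivalent} to the commutant identity $\pi_0(\alg{A}(\Lambda))''=\pi_0(\alg{A}(\Lambda^c))'$. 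The ribbon combinatorics you mention do enter, but only in establishing the density statement, not in producing a spatial tensor splitting.

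The gap in your plan is the step where you aim for a unitary $U$ conjugating the pair $\bigl(\pi_0(\alg{A}(\Lambda))'',\pi_0(\alg{A}(\Lambda^c))''\bigr)$ to $(N\otimes\mathbb{C},\,\mathbb{C}\otimes N')$ with $N$ a type~I factor. This is impossible in the model at hand: the cone algebras $\pi_0(\alg{A}(\Lambda))''$ are infinite non-type-I factors (the paper uses this explicitly later, and it is the content of the ``not type~I'' remark in the section on amplimorphisms). A type~I tensor splitting across $\partial\Lambda$ would force the ground state to have finite entanglement across the cone boundary, which is exactly what fails for a topologically ordered state. So no such $U$ exists, and the ``trivial identity'' you hoped to reduce to is unavailable. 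The split property that \emph{is} known for this model concerns two nested cones $\Lambda_1\Subset\Lambda_2$ separated by a collar, not a cone and its complement; it does not give you what you need here.

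A smaller point: the lemma that $F^{h,g}_\xi\Omega$ depends only on the endpoints is a statement about vectors, not operators. It lets you compare \emph{states}, but it does not by itself place a ribbon operator whose support crosses $\Lambda$ into $\pi_0(\alg{A}(\Lambda^c))''$. In the paper's argument this distinction is handled precisely because one only ever needs control over vectors of the form $A\Omega$, which is why the Rieffel--Van Daele route is natural.
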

This theorem was first proven in~\cite{haagdtoric} for the case of $G = \mathbb{Z}_2$ (the toric code) and later extended to all finite abelian groups in~\cite{haagdouble}.

The proof of the theorem depends on a good knowledge of the pairs of excitations and the operators that generate them. This makes it possible to find a convenient description of the Hilbert space of the system, and the Hilbert space $\mathcal{H}_\Lambda$ describing pairs of excitations localised in a fixed cone $\Lambda$. With this description it is possible to demonstrate that the self-adjoint parts of certain algebras (the restriction to $\mathcal{H}_\Lambda$ of the algebras of observables localised in $\Lambda$, and the algebra of observables localised in $\Lambda^c$), when acting on the cyclic GNS vector generate a dense subset of $\mathcal{H}_\Lambda$. It is known that this is related to commutation properties of algebras~\cite{MR0383096}, and this allows one to conclude Haag duality.

Much of the proof boils down to a good understanding of the representation theory of $\mathcal{D}(G)$, the quantum double of $G$, since the operators that create pairs of excitations correspond to irreducible representations of the Hopf algebra. Their behaviour under the interchange of two such operators (or on the representation theory side, the braiding of the category of representations) also plays a role. These properties are well-known and studied for a wider class of quantum doubles, particularly those associated to Hopf-$*$ algebras that are quasi-triangular (which means a braiding can be defined). As Kitaev already remarked in his paper~\cite{MR1951039}, the quantum double model can also be defined for such Hopf algebras. The following conjecture therefore seems very natural: 
\begin{conjecture}
Consider a quasi-triangular Hopf-$*$ algebra $H$. Then the corresponding Kitaev model on the plane satisfies Haag duality for cones.
\end{conjecture}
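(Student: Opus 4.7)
The plan is to adapt the proof for finite abelian $G$ from \cite{haagdtoric,haagdouble} to the quasi-triangular Hopf-$*$ algebra setting, replacing $\mathbb{C}[G]$ by $H$ throughout. First I would verify that Kitaev's lattice construction extends: assign the Hilbert space $H$ (with inner product induced by the Haar integral) to each edge, define generalised star and plaquette operators implementing the left multiplication and dual action of $H$ at each site, and check that together they realise an action of $\mathcal{D}(H)$ at every site. Then construct ribbon operators $F_\xi^{a,b}$ for $a,b$ in bases of $H$ and $H^*$, establish the recursion relation analogous to~\eqref{eq:ribbonop} (now phrased in terms of the coproduct and antipode of $\mathcal{D}(H)$), and derive the commutation relations with site operators at the endpoints of the ribbon. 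The quasi-triangular element $R \in H \otimes H$ enters through the commutation relations of crossing ribbon operators, which encode the braiding of the corresponding excitations.

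The second stage is to establish a translationally invariant ground state $\omega_0$ satisfying $\omega_0(A_s) = \omega_0(B_s) = 1$, form its GNS representation $(\pi_0, \Omega, \mathcal{H}_0)$, and prove an explicit Hilbert space decomposition. Following \cite{haagdouble}, I would decompose $\mathcal{H}_0$ as an orthogonal sum of subspaces labelled by the ``charge type'' of a pair of excitations straddling the boundary of the cone $\Lambda$, with ribbon operators analogous to $F_\xi^{C\rho;i,i',j,j'}$ producing an explicit generating set within each subspace. Haag duality then follows from the Rieffel--Van Daele criterion \cite{MR0383096}: one shows that $(\pi_0(\alg{A}(\Lambda))_{sa} + \pi_0(\alg{A}(\Lambda^c))_{sa})\Omega$ is dense in $\mathcal{H}_0$, by splitting each basis vector via~\eqref{eq:ribbonop} into a ribbon piece inside $\Lambda$, a ribbon piece inside $\Lambda^c$, and a ``bridging'' contribution supported near the crossing of the ribbon with $\partial\Lambda$ that can be absorbed into either cone algebra modulo operators acting at the endpoints.

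The main obstacle is the non-abelian nature of a general quasi-triangular $H$. In the abelian group case the irreducible $\mathcal{D}(G)$-representations are one-dimensional, so ribbon operators are unitary and the splitting in~\eqref{eq:ribbonop} reduces to a scalar twist; for general $H$ the sum runs over an internal basis of a higher-dimensional module, so the bridging term carries matrix indices tied to the endpoints of the ribbon, and it is not at all clear that this term can be absorbed into the cone algebras without further input. This is precisely the difficulty that in the DHR analysis of the non-abelian case forces the use of \emph{amplimorphisms}, and a successful proof will probably have to combine the density argument with a careful bookkeeping of endpoint degrees of freedom via the $\mathcal{D}(H)$-module structure at the sites $s_1,s_2$; one might hope to use the quasi-triangular structure to reduce the analysis to one endpoint at a time. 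Secondary difficulties include establishing existence and uniqueness of a translationally invariant ground state for general $H$ (which will likely require assumptions such as semisimplicity of $\mathcal{D}(H)$ and a faithful positive Haar integral), and verifying the Hopf-algebraic analogues of the combinatorial centraliser arguments used in~\cite{haagdouble}.
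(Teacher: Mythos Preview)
The statement you are addressing is a \emph{conjecture} in the paper, not a theorem: the paper offers no proof and explicitly flags it as open. What the paper does provide, immediately after stating the conjecture, is a one-paragraph discussion of why it is hard, namely that for non-abelian $H$ the irreducible $\mathcal{D}(H)$-representations are no longer one-dimensional, so the ribbon operators come in multiplets and the combinatorics of the abelian argument do not go through directly.

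Your proposal is therefore not a proof to be compared against the paper's proof, but a plan of attack on an open problem. As such it is reasonable and well aligned with the paper: the two-stage strategy (extend the lattice construction and ribbon calculus to $H$, then run the Rieffel--Van~Daele density argument from \cite{haagdtoric,haagdouble}) is exactly the natural route, and the ``main obstacle'' you isolate --- that the splitting \eqref{eq:ribbonop} now carries internal matrix indices which cannot obviously be absorbed into either cone algebra --- is precisely the difficulty the paper singles out. You also correctly note the connection to amplimorphisms, which the paper develops in Section~5 for the related problem of constructing sectors. What you have written is an honest outline with the genuine gap clearly identified; it does not constitute a proof, but neither does the paper claim one.
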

The main difficulty in proving this is that the combinatorics get much more involved. In particular, in the case of abelian algebras all irreducible representations are one dimensional. For non-abelian algebras one has to consider multiplets of operators that create pairs of excitations, transforming according to some irreducible representation of the Hopf algebra $\mathcal{D}(H)$. We will see some of the consequences later on, when we discuss non-abelian theories.

It should be noted that many of the constructions we present below do not depend on Haag duality. The only place where it is used is in going from a representation that satisfies the selection criterion to a localised endomorphism. Representatives of these endomorphisms (and intertwiners between them that change the localisation region) can be constructed without an appeal to Haag duality. However, it is then not possible to conclude that each representative of the equivalence class is indeed given by a localised endomorphism.

\section{Superselection theory for abelian models}
It is well-known that the superposition principle of quantum mechanics does not hold unrestrictedly. A familiar example is that of bosons and fermions~\cite{WWW}. Consider the vector $\psi = \frac{1}{\sqrt{2}}( \psi_f + \psi_b)$, where $\psi_f$ is a single-particle fermionic state, while $\psi_f$ is bosonic. Under a rotation of the system by 360 degrees, the fermionic state will acquire a minus sign, while the bosonic part is unchanged. Physically, however, the states are indistinguishable, and will lead to the same expectation values. In general, one can argue that that the physical Hilbert space can be decomposed into different sectors corresponding to the different types of charges in the system. This is called a \emph{superselection rule}.

On a mathematical level, superselection sectors arise because there are inequivalent representations of the observable algebra $\alg{A}$. A $C^*$-algebra has many inequivalent representations in general, hence one should somehow select the physically relevant representations. There are several conditions one might impose. For example, in relativistic theories a natural condition is to look at representations that are covariant with respect translations, such that the spectrum of the generators (that is, the momentum) is contained in the forward light cone. Doplicher, Haag and Roberts proposed to look at those representations that, roughly speaking, look like the vacuum representation in the spacelike complement of a double cone, the intersection of a forward and backward light-cone~\cite{MR0297259}. Although this certainly does not cover all physical systems of interests (it excludes, for example, electromagnetic charges~\cite{MR660538}), it is nevertheless a useful criterion, and can for example also be applied to conformal theories on the circle~\cite{GFCFT}.

What Doplicher, Haag and Roberts (DHR) showed~\cite{MR0297259,MR0334742} (see also~\cite{FredenhagenLQP,RehrenCFT}) is that these representations can be studied in a systematic way. In this way, one can learn something about the statistics of the different charges (i.e., if they are bosons or fermions in space-times of dimensions $\geq 2+1$), or how they behave under composition of two charges (so-called \emph{fusion rules}). Mathematically, it amounts to studying the structure of the equivalence classes satisfying the selection criterion as a \emph{braided fusion category}, which is a type of \emph{tensor category}. Some of the essential steps will be outlined below, but a more thorough introduction can be found in, e.g.,~\cite{MR1405610,halvapp}. The study of tensor categories is a whole field on its own (an introduction can be found in~\cite{muegerfusion}), but familiarity with the main terminology is not strictly necessary for our purposes here.
 
In the remainder of this section $G$ will be a finite abelian group. We will see how one can construct different superselection sectors for Kitaev's quantum double model, and do a Doplicher-Haag-Roberts type of analysis to recover the properties of the different charges. The results outlined here have been obtained in~\cite{haagdouble,toricendo}, to which the reader is referred for details. It is perhaps somewhat surprising to see that this theory, which was originally developed for relativistic quantum systems, can be applied so successfully to discrete lattice quantum spin systems, even though there are fundamental technical differences.

\subsection{Localized representations}
As mentioned, the different superselection sectors are identified as equivalence classes of representations of the observable algebra, satisfying additional selection criteria. Here we take the ground state representation $\pi_0$, corresponding to the translationally invariant ground state, as a reference representation, and look at all representations that look like $\pi_0$ when considering observables outside a \emph{cone-like region} $\Lambda$. The reason for this will become clear later, but at this point we mention that this is similar to the work of Buchholz and Fredenhagen, who show that in relativistic theories, massive particles can be localised in \emph{spacelike} cones~\cite{MR660538,BF-ICMP1981}, leading to a similar criterion. We will write $\mc{L}$ for the set of cones.

\begin{definition}\label{def:select}
A representation $\pi$ of $\alg{A}$ is called \emph{localizable} if for each $\Lambda \in \mc{L}$ we have
\begin{equation}
	\pi_0 \upharpoonright \alg{A}(\Lambda^c) \cong \pi \upharpoonright \alg{A}(\Lambda^c),
	\label{eq:select}
\end{equation}
where $\cong$ denotes unitary equivalence of representations and $\upharpoonright$ means that we restrict the representation to the subalgebra $\alg{A}(\Lambda^c)$.
\end{definition}
An equivalence class of representations satisfying the criterion is called a \emph{(superselection) sector} or simply a \emph{charge}. 

To proceed in the analysis of these representations, we have to pass from representations to endomorphisms of the observable algebra. This can be done with the help of Haag duality. Indeed, let $\Lambda \in \mc{L}$. Then from equation~\eqref{eq:select} there is a unitary $V_\Lambda$ setting up the equivalence with $\pi_0$ for the subalgebra $\alg{A}(\Lambda^c)$. Define $\rho(A) = V_\Lambda \pi(A) V_\Lambda^*$ and let $\Lambda_2 \in \mc{L}$ contain $\Lambda$. Let $A \in \alg{A}(\Lambda_2)$ and $B \in \alg{A}(\Lambda_2^c)$. Note that $\rho(B) = \pi_0(B)$. Moreover, by locality,
\[
\rho(AB) = \rho(A) \pi_0(B) = \rho(BA) = \pi_0(B)\rho(A),
\]
hence by Haag duality $\rho(A) \in \pi_0(\alg{A}(\Lambda_2^c))' = \pi_0(\alg{A}(\Lambda_2))''$. Since $\pi_0$ is faithful, we can identify $\alg{A}$ with $\pi_0(\alg{A})$ and regard $\rho$ as an endomorphism of $\alg{A}$.\footnote{Strictly speaking, this is only true if $\alg{A}(\Lambda_2)'' \subset \alg{A}$, which in general is not the case for unbounded regions $\Lambda_2$. One can however solve this by passing to a larger algebra $\alg{A}^{\Lambda_a} \supset \alg{A}$ and extend $\rho$ to a proper endomorphism of $\alg{A}^{\Lambda_a}$~\cite[Sect. 4]{MR660538}.} The endomorphism is called \emph{localised} in $\Lambda$. It is also \emph{transportable}: if $\Lambda' \in \mc{L}$, there is a unitary $V$ and endomorphism $\rho'$ (localised in $\Lambda'$) such that $V \rho(A) = \rho'(A) V$ for all $A \in \alg{A}$. Such an operator $V$ (not necessarily unitary) is called an \emph{intertwiner} from $\rho$ to $\rho'$. If $V$ is unitary we will also call it a \emph{charge transporter}, since it moves a charge from one cone to another. Using Haag duality one can show that in fact $V \in \alg{A}(\widehat{\Lambda})$ for suitable $\widehat{\Lambda} \in \mc{L}$ (where $\widehat{\Lambda}$ should contain the localization regions of both $\rho$ and $\rho'$). There is in fact a 1-1 correspondence between superselection sectors and equivalence classes of these localised and transportable endomorphisms. From now on we will work with the latter.

\subsection{Localized sectors in the quantum double model}
Recall that the ribbon operators $F_\xi^{\omega,c}$ create a \emph{pair} of excitations (or charges). We are however interested in the properties of a \emph{single} charge. Since we are in an infinite system, it is possible to create a pair of excitations, and move one of them to infinity. As may already be anticipated from the discussion in Section~\ref{sec:qdouble}, the different charges are in 1-1 correspondence with pairs $(\omega,c)$ of a character $\omega$ and group element $c$. The ribbon operator $F_\xi^{\omega,c}$ then creates the corresponding charge at the beginning of $\xi$, and a conjugate charge at the other end. On the level of the observables, this means that we map $A \mapsto F_\xi^{\omega,c} A (F_{\xi}^{\omega,c})^*$.\footnote{This actually corresponds to creating charges with $(F_\xi^{\omega,c})^*$, which turns out to be slightly more convenient.} Hence to implement the idea of moving one charge to infinity, one can choose a semi-infinite ribbon $\xi$, which for simplicity we assume to be completely inside some $\Lambda \in \mc{L}$, and write $\xi_n$ for the (finite!) ribbon consisting of the first $n$ parts. Then we define
\begin{equation}
	\alpha(A) = \lim_{n \to \infty} F_{\xi_n}^{\omega,c} A (F_{\xi_n}^{\omega,c})^*.
	\label{eq:auto}
\end{equation}
Considering the dense set of local observables, and using the decomposition rule~\eqref{eq:ribbonop}, it is not so difficult to show that this expression converges, and defines an automorphism of $\alg{A}$. It describes how the observables change in the presence of a \emph{single} charge $(\omega,c)$ in the background.

The map $\alpha$ defined above leads us to an example of a representation that satisfies the selection criterion, by defining $\pi = \pi_0 \circ \alpha$. Clearly, if $\Lambda$ is any cone containing the ribbon, $\alpha(A) = A$ for all $A \in \alg{A}(\Lambda^c)$ by locality, so $\alpha$ is localised. It is also transportable. There is a neat way of seeing this: because the vector state $F_{\xi_n}^{\omega,c} \Omega$ depends only on the endpoints of the path, it follows that two for automorphisms $\alpha_1$ and $\alpha_2$ defined in terms of the same charge, but different ribbons (with the same fixed endpoints), the states $\omega_0 \circ \alpha_1$ and $\omega_0 \circ \alpha_2$ coincide. In addition, note that both triples $(\pi_0 \circ \alpha_i, \Omega, \mc{H}_0)$ are GNS triples for this state, so that by the uniqueness of the GNS construction, $\pi_0 \circ \alpha_1$ must be unitarily equivalent to $\pi_0 \circ \alpha_2$. The argument can be extended if the endpoints of the ribbon do not coincide, by conjugating with a suitable ribbon operator. It follows that the automorphisms are transportable. The corresponding intertwiners can be shown to be in $\pi_0(\alg{A}(\Lambda))''$ for a suitable cone $\Lambda$ using Haag duality, but for the quantum double model an explicit construction of a net converging in the weak operator topology to an intertwiner is also possible.

The construction above gives for each pair $(\omega,c)$ an equivalence class of localised and transportable automorphisms, but it is not clear yet that these classes are distinct. This does turn out to be the case. The key idea in showing this is by considering an analogue of Wilson loops, which measure the charge in the region enclosed by the loop. Such operators can be obtained by considering ribbons with the same start and ending sites. In this way one can construct a charge measurement operator in an arbitrarily large region, that have expectation value 1 in the state $\omega_0 \circ \alpha$, where $\alpha$ has charge $(\omega,c)$, and zero expectation value in states obtained from a different pair $(\omega', c')$. It follows that the corresponding representations $\pi_0 \circ \alpha$ must be inequivalent. The discussion can be summarized in the following theorem:
\begin{theorem}
Let $G$ be an abelian group. For each pair $(\omega,c)$ of a character of $G$ and an element $c \in G$, there is an equivalence class of localised and transportable automorphisms. If $\alpha_1$ and $\alpha_2$ are such automorphisms, then $\pi_0 \circ \alpha_1 \cong \pi_0 \circ \alpha_2$ if and only if the belong to the same charge class $(\omega,c)$.
\end{theorem}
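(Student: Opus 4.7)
For the existence and localisation, fix a pair $(\omega,c)$ and a semi-infinite ribbon $\xi \subset \Lambda$ with truncations $\xi_n$. The limit in~\eqref{eq:auto} I would verify on a local observable $A$ as follows: once $n$ is large enough that the tail $\xi \setminus \xi_n$ is supported away from $A$, the recursion relation~\eqref{eq:ribbonop} (which in the abelian case collapses to a single term, since $\inv{k}h k = h$) factors $F_{\xi_m}^{\omega,c}$ for $m \geq n$ as $F_{\xi_n}^{\omega,c}$ times an operator that commutes with $A$. Hence the conjugate $F_{\xi_m}^{\omega,c} A (F_{\xi_m}^{\omega,c})^*$ is constant in $m$, and the limit defines a norm-preserving $*$-homomorphism on $\alg{A}_{loc}$ which extends to $\alg{A}$. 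That $\alpha$ is an automorphism follows by applying the same construction with $(\inv{\omega},\inv{c})$ and using the unitarity of the ribbon operators in the abelian case. Localisation is immediate: if $A \in \alg{A}(\Lambda^c)$, each $F_{\xi_n}^{\omega,c}$ commutes with $A$ by locality, so $\alpha(A)=A$.

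Transportability and the well-definedness of the class both follow from the topological character of the ribbon states. Given two semi-infinite ribbons $\xi,\xi'$ with common starting site, the endpoint lemma gives $F_{\xi_n}^{\omega,c}\Omega = F_{\xi_n'}^{\omega,c}\Omega$ on sufficiently long truncations, so the pure states $\omega_0 \circ \alpha$ and $\omega_0 \circ \alpha'$ agree on all local observables, hence on $\alg{A}$. The uniqueness of the GNS triple then supplies a unitary intertwining $\pi_0\circ\alpha$ and $\pi_0\circ\alpha'$. If the starting sites differ, precomposition with conjugation by a single finite ribbon joining them reduces to the previous case and, choosing that auxiliary ribbon inside a prescribed target cone $\Lambda'$, produces an equivalent automorphism $\alpha'$ localised in $\Lambda'$. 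This simultaneously yields transportability and shows that the equivalence class depends only on $(\omega,c)$.

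Inequivalence of distinct classes I would establish by constructing closed-ribbon charge measurement operators. For a closed ribbon $\zeta$ (with coinciding start and end sites) encircling the starting site of $\xi$ and lying outside an intermediate cone, one can form from the $F_\zeta^{\chi,d}$, using characters of $G$ together with~\eqref{eq:starplaq}, central combinations $P^{\chi,d}_\zeta$ that act as projections onto definite charge content inside the loop. Using~\eqref{eq:comms} and~\eqref{eq:comme}, a direct calculation (again simplified by abelianness) shows $\omega_0\bigl(\alpha(P^{\chi,d}_\zeta)\bigr) = \delta_{(\chi,d),(\omega,c)}$, because the abelian braiding between $\alpha$'s charge and $\zeta$'s loop yields the character pairing, which vanishes on orthogonal representations of $\mc{D}(G)$. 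Since the operators $P^{\chi,d}_\zeta$ can be arranged to lie in $\alg{A}(\Lambda_0^c)$ for any prescribed cone $\Lambda_0$ containing the localisation regions, the selection criterion in Definition~\ref{def:select} would preserve their expectation values under any unitary equivalence; as the values differ for distinct $(\omega,c)$, no such equivalence can exist.

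The step I expect to be most delicate is the last one: one must verify that the closed-loop combinations $P^{\chi,d}_\zeta$ really are observables in the complement of a cone containing the charge, and that the character-pairing computation is carried out consistently with the orientation conventions for the starting and ending sites of $\zeta$. The first two paragraphs, by contrast, are essentially forced once one has the endpoint lemma and the ribbon recursion in hand.
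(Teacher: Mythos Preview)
Your first two paragraphs reproduce the paper's argument faithfully: convergence of~\eqref{eq:auto} via the ribbon recursion, localisation by locality, and transportability via the path-independence lemma combined with GNS uniqueness are exactly the steps the paper takes in the discussion leading up to the theorem.

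The inequivalence argument, however, contains a genuine gap---not merely a delicate verification as you suggest, but an internal contradiction. You assert that the closed-loop projections $P^{\chi,d}_\zeta$ can simultaneously (i) encircle the starting site of $\xi$ so as to detect the charge, giving $\omega_0(\alpha(P^{\chi,d}_\zeta)) = \delta_{(\chi,d),(\omega,c)}$, and (ii) lie in $\alg{A}(\Lambda_0^c)$ for a cone $\Lambda_0$ containing the localisation region. These are incompatible: if $P^{\chi,d}_\zeta \in \alg{A}(\Lambda_0^c)$ then $\alpha(P^{\chi,d}_\zeta) = P^{\chi,d}_\zeta$ by the very localisation you established, so the expectation value is $\omega_0(P^{\chi,d}_\zeta)$, the trivial-charge value, regardless of $(\omega,c)$. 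Geometrically the obstruction is that the complement of a cone is simply connected, so no closed ribbon in $\Lambda_0^c$ can encircle a site in $\Lambda_0$; any loop linking the charge must cross the semi-infinite ribbon and hence enter $\Lambda_0$. Consequently the selection criterion cannot be invoked in the way you propose.

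The paper's route (only sketched there, with details in the cited references) is instead to let the loops $\zeta_n$ grow without bound, necessarily crossing the ribbon. One then argues that $\pi_0(\alpha(P^{\chi,d}_{\zeta_n}))$ converges weakly to the scalar $\delta_{(\chi,d),(\omega,c)} I$: on the dense set of vectors $\pi_0(\alpha(B))\Omega$ with $B$ local, a loop supported far outside $\supp(B)$ still sees only the single charge at the starting site, since a local $B$ can only create conjugate pairs of excitations. A unitary equivalence would transport this weak limit, forcing the scalars---and hence the charge labels---to coincide.
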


This result can also be understood as an instance of \emph{charge conservation}. The total charge of the state can be obtained by finding all excitations $\omega$ (resp. $c$) in the state, and multiply all of them together. Note that this is possible because the dual of an abelian group is a group again. The ribbon operators span a dense subset of the quasilocal algebra $\alg{A}$. Because the two distinct charges at the end of the ribbons transform according to \emph{conjugate} representations, it follows that one cannot change the total charge of the system just by local operations. In particular, one cannot go from a charged sector with a certain total charge to another sector with a different total charge by acting with local operators, so that indeed we have inequivalent representations of $\alg{A}$.

\subsection{Braiding and fusion}
So far we have constructed endomorphisms that describe single charges localised in cones, and charge transporters or intertwiners that can move the charges around. There is however much more structure, and this is where it becomes essential that we have endomorphisms rather than representations: in contrast to representations, endomorphisms can be composed. If $\rho_1$ and $\rho_2$ are two localised endomorphisms, we define $\rho_1 \otimes \rho_2 (A) = \rho_1 \circ \rho_2(A)$. The interpretation is that we first create a charge $\rho_2$, and then add a charge $\rho_1$ to the system. This operation is called \emph{fusion}. Note that $\rho_1 \circ \rho_2$ is localised again, more particularly it is localised in any cone that contains the localization regions of $\rho_1$ and $\rho_2$. It is also transportable again. This can be seen by defining a product operation for intertwiners as well: if $S\rho_1(A) = \rho_2(A)S$ and $T \sigma_1(A) = \sigma_2(A) T$ for all $A \in \alg{A}$, we define $S \otimes T := S \rho_1(T)$. Then an easy calculation shows that $S \otimes T$ intertwines $\rho_1 \otimes \sigma_1$ and $\rho_2 \otimes \sigma_2$. There is however a slight technical issue. The intertwiners $S$ are in general only elements of weak closures of the form $\pi_0(\alg{A}(\Lambda))''$ for some cone $\Lambda$ (this follows from Haag duality), while the endomorphisms are \emph{a priori} only defined on $\alg{A}$. It is however possible to extend the endomorphisms to a slightly larger algebra $\alg{A}^{\Lambda_a}$, where $\Lambda_a$ is some fixed auxiliary cone, that does contain the intertwiners. Since this is a minor technical point, we ignore the issue here, and refer to~\cite{MR660538,toricendo} for technical details.

Even if $\rho_1$ and $\rho_2$ are irreducible, their composition need not be irreducible any more. A natural question therefore is if it is possible to decompose the composition into irreducibles again. The \emph{fusion rules} give this decomposition. That is, if $\rho_i$ and $\rho_j$ are irreducible endomorphisms, there are integers $N_{ij}^k$, where $k$ runs over a set of representatives of all irreducible localised and transportable endomorphisms, such that
\begin{equation}
	\rho_i \circ \rho_j \cong \sum_{k} N^k_{ij} \rho_k.
	\label{eq:fusionrules}
\end{equation}
The sum operation is a direct sum, which can be described in terms of intertwiners. This is very similar to representation theory of finite (or compact) groups: the tensor product of two irreducible representations can be decomposed again as the direct sum of finitely many irreducible representations.\footnote{Indeed, if the theory contains only bosonic sectors, the localised and transportable endomorphisms are in one-to-one correspondence with the representations of a compact global symmetry group $G$~\cite{DR90}.} Note that the identical endomorphism $\iota(A) = A$ acts as a unit for the fusion operation. Finally, a charge $\rho$ has a \emph{dual} or \emph{conjugate} $\overline{\rho}$ if there is such a localised and transportable $\overline{\rho}$ with $N^{\iota}_{\rho \overline{\rho}} = 1$. The existence of conjugates is not automatic, and has to be either proven or taken as an assumption. Physically, it is related to the existence of (charge) anti-particles.

In algebraic quantum field theory, it is usually shown that such a decomposition exists with the help of a technical property, \emph{Property B}~\cite{Borchers67}, that essentially allows any projection in the local algebra to be written in the form $P = W W^*$ for some isometry $W$, localised in a slightly bigger region. In the present situation of spin systems, one could go about by showing that direct sums can be constructed, and by explicitly verifying the fusion rules~\eqref{eq:fusionrules}~\cite{toricendo}. This is particularly easy in abelian models, where there always is a unique fusion outcome. That is, $N^{k}_{ij}$ is equal to one for exactly one value of $k$, and zero otherwise. Symbolically, for the abelian quantum double model the fusion rules are
\[
	(\omega_1, c_1) \otimes (\omega_2, c_2) = (\omega_1 \omega_2, c_1 c_2). 
\]
This can be seen by considering, without loss of generality, a single ribbon $\xi$, and define the automorphisms $\alpha_1$ and $\alpha_2$ corresponding to the choices of charges. Using the multiplication rule for two ribbon operators in abelian models, mentioned at the end of Section~\ref{sec:qdouble}, the result follows.

The final piece of structure that we discuss here is that of \emph{braiding}. This is related to the statistics of identical particles, that is, their behaviour under interchange. This amounts to the study of the relation between $\rho_1 \otimes \rho_2$ and $\rho_2 \otimes \rho_1$ for general localised endomorphisms $\rho_1$ and $\rho_2$. An intertwiner $\varepsilon_{\rho_1,\rho_2}$ relating the two can be constructed explicitly by, quite literally, moving $\rho_2$ around (or, $\rho_1$ of course, in similar way). More precisely, one can choose a localization region $\widehat{\Lambda}$ that is disjoint from the localization regions of both $\rho_1$ and $\rho_2$. By transportability there is a unitary $U$ and a endomorphism $\widehat{\rho}$ localised in $\widehat{\Lambda}$ such that $U \rho_2(A) = \widehat{\rho}(A) U$. Because $\rho_1 \otimes \widehat{\rho} = \widehat{\rho} \otimes \rho_1$ since their localization regions are disjoint, it follows that $\varepsilon_{\rho_1, \rho_2} := (U^* \otimes I)(I \otimes U) = U^* \rho_1(U)$ is an intertwiner from $\rho_1 \circ \rho_2$ to $\rho_2 \circ \rho_1$.

In three or more spatial dimensions, the definition of $\varepsilon_{\rho_1, \rho_2}$ is completely independent of the choices made, and one can show that $\varepsilon_{\rho_1,\rho_2}\varepsilon_{\rho_2,\rho_1} = I$. Note that this is the situation of ordinary bosons and fermions: moving one particle around the other doesn't change the system. One can show that the operators $\varepsilon_{\rho, \rho}$ indeed induce a representation of the permutation group, which interchanges charged excitations~\cite{MR0297259}. In lower dimensional space times, things get more interesting. It is no longer true that there is a unique choice of $\varepsilon_{\rho_1, \rho_2}$~\cite{FRS1,MR1199171,FG90}. The reason is that one cannot continuously move the cone $\widehat{\Lambda}$ around, but instead has to choose between either ``left'' or ``right''. This can be defined unambiguously by choosing an auxiliary cone $\Lambda_a$ as above. To get a consistent definition for the operators $\varepsilon_{\rho_1, \rho_2}$ (for example, to make sure that the different ways to go from ($\rho \otimes \sigma \otimes \tau$ to $\tau \otimes \rho \otimes \sigma$ coincide), one has to choose one of the two alternatives, and stick to that. A consequence of this ambiguity is that it is no longer true that $\varepsilon_{\rho_1,\rho_2} \varepsilon_{\rho_2,\rho_1} = I$. In addition, one does not obtain a representation of the permutation group any more, but rather one of the braid group. Consequently, such charges are said to have \emph{braid statistics}. If this representation is abelian, one calls the charges (abelian) anyons (because they can pick up ``any'' phase under interchange). If the representation is non-abelian, one usually speaks of non-abelian anyons.\footnote{In the local quantum physics literature the name \emph{plektons} is used. This name does not seem to have caught on outside of that community: the name non-abelian anyons prevails for example in the field of topological quantum computing, and this is why we adhere to that name here.}

The charges of the quantum double model are abelian anyons. This can be verified directly, by calculating the operators $\varepsilon_{\rho_1, \rho_2}$ and $\varepsilon_{\rho_2,\rho_1}$, together with the explicit construction of charge intertwiners, mentioned above. In the end the calculation boils down to commutation relations of the ribbon operators: $F^{\omega,c}_{\xi_1} F^{\sigma,d}_{\xi_2} = \omega(d) \sigma(c)  F^{\sigma,d}_{\xi_2} F^{\omega,c}_{\xi_1}$ for two crossing ribbons $\xi_1$ and $\xi_2$. In the end one finds that
\[
\varepsilon_{(\omega,c),(\sigma,d)} \varepsilon_{(\sigma,d),(\omega,c)} = \overline{\omega}(d) \overline{\sigma}(c),
\]
showing that the charges are indeed abelian anyons.

\subsection{The category of localised endomorphisms}\label{sec:category}
The set of localised endomorphisms has a very rich structure, of which we have given some examples above. Mathematically, it has the structure of a \emph{braided unitary fusion category}.\footnote{Strictly speaking, ``fusion'' implies that there are only finitely many equivalence classes of irreducible objects. This is the case for the quantum double models, but need not be true in general. Dropping that condition does not make a difference in many cases.} The theory of such categories is quite rich, and an active field of study, in particular considering the classification of such categories. There are by now a few texts that provide an accessible entry to the literature. We mention for example~\cite{MR1321145,halvapp,muegerfusion,Wang}, each of which has a different focus. 

We will denote this category by $\Delta$. Its objects are the cone localised and transportable endomorphisms of $\alg{A}$. The morphisms are intertwiners: $T \in \Hom(\rho, \sigma)$ if $T \rho(A) = \sigma(A) T$ for all $A \in \alg{A}$. Note that if $T \in \Hom(\rho,\sigma)$, then $T^* \in \Hom(\sigma,\rho)$, where $T^*$ is the adjoint of $T$. The isomorphisms in this category are unitary operators, so that the we indeed obtain equivalence classes of localised and transportable endomorphisms. An object is irreducible, by definition, if $\Hom(\rho,\rho) \cong \mathbb{C}$. Note that this is just another way of stating that the commutant of $\rho(\alg{A})$ is trivial, hence this coincides with the usual notion of an irreducible representation.

In the previous section we defined a tensor product. This is actually a tensor product in the category, where the trivial endomorphism is the tensor unit. It is in fact a \emph{strict} tensor category: associativity of the tensor product holds on the nose, not only up to isomorphism as is often the case in category theory. For each pair of objects $\rho_1$ and $\rho_2$, the definition of the braid operator $\varepsilon_{\rho_1,\rho_2}$ gives an isomorphism $\varepsilon_{\rho_1,\rho_2} \in \Hom(\rho_1, \rho_2)$. This assignment is in fact natural in both variables, and satisfies the so-called ``braid equations'', which give consistency conditions. This makes $\Delta$ into a \emph{braided} tensor category. It is \emph{symmetric} if $\varepsilon_{\rho_1, \rho_2}^{-1}  = \varepsilon_{\rho_2, \rho_1}$ for all objects $\rho_1, \rho_2$.

Direct sums can be described in the category theory language as well. If $\rho_1$ and $\rho_2$ are objects in $\Delta$, an object $\rho_1 \oplus \rho_2$ is a direct sum if there are $V_i \in \Hom(\rho_i, \rho_1 \oplus \rho_2)$ with $\sum_i V_i V_i^* = I$ and $V_i^* V_j = \delta_{i,j} I$. In algebraic quantum field theory the existence of such direct sums follows from Property B discussed above. In the case of the quantum double model we can explicitly construct the direct sums. Let $\Lambda$ be a cone and suppose that $\rho_1$ and $\rho_2$ are localised in $\Lambda$. Adapting the arguments of~\cite{KMSW,Matsui}, it follows that $\mc{R}_\Lambda := \pi_0(\alg{A}(\Lambda))''$ is an infinite factor, and hence there are isometries $V_i \in \mathcal{R}_\Lambda$ such that $\sum_{i} V_i V_i^* = I$ and $V_i V_j^* = \delta_{i,j I}$. We can then define
\[
	\rho_1 \oplus \rho_2(A) := V_1 \rho_1(A) V_1^* + V_2 \rho_2(A) V_2^*.
\]
Because of locality it follows that $\rho_1 \oplus \rho_2$ is localised in $\Lambda$, and a straightforward calculation shows that this indeed is a direct sum in the categorical sense. Note that the direct sum is only defined up to unitary equivalence. This also explains how the summation in the fusion rules has to be understood, namely as a direct sum in this category. Note that in the abelian models we consider here, there is only one non-zero fusion coefficient, so the direct sum construction is not necessary.

Finally, there is the notion of \emph{duals}. Physically, these can be interpreted as anti-charges. If $\rho$ is an irreducible object of $\Delta$, a dual (or conjugate) is a $\overline{\rho} \in \Delta$ such that the trivial endomorphism $\iota$ of $\alg{A}$ appears exactly once in the direct sum composition of $\rho \otimes \overline{\rho}$ and of $\overline{\rho} \otimes \rho$. Note that this coincides with the notion of a conjugate in terms of the fusion coefficients $N_{ij}^k$ mentioned before. This implies that there is an isometry $R \in \Hom(\iota, \rho \otimes \overline{\rho})$. We assume all objects have conjugates. In this case, it follows that all Hom-sets are \emph{finite-dimensional} vector spaces (over $\mathbb{C}$). It follows from the fusion rules that for the abelian quantum double model, the conjugate of $(\omega,c)$ is $(\overline{\omega}, c^{-1})$.

So far we have constructed different superselection sectors of the abelian quantum double model, and studied their properties. The question remains if there are perhaps additional sectors, that we have so far overlooked. In other words, are there perhaps representations that satisfy Definition~\ref{def:select}, but are not equivalent to one of the representations constructed so far? This question can be answered by adapting techniques from rational conformal nets~\cite{MR1838752}. In particular, one can consider two disjoint cones $\Lambda_1$ and $\Lambda_2$, and look at the von Neumann algebra $\pi_0(\alg{A}(\Lambda_1 \cup \Lambda_2))''$. While Haag duality holds for a single cone, for a \emph{pair} of cones it generally does not hold anymore. This means that $\pi_0(\alg{A}(\Lambda_1 \cup \Lambda_2))'' \subset \pi_0(\alg{A}( (\Lambda_1 \cup \Lambda_2)^c)''$ in general is a proper inclusion (it is an irreducible subfactor, in fact). Essentially, the bigger algebra also contains the intertwiners or charge transporters that move a charge from one cone to the other. Hence by studying how much bigger the algebra is, one can learn something about the number of sectors in the theory. This relative size can be quantified by the Jones-Kosaki-Longo index $[\pi_0(\alg{A}( (\Lambda_1 \cup \Lambda_2)^c)'': \pi_0(\alg{A}(\Lambda_1 \cup \Lambda_2))'']$, which in turn gives an upper bound on the number of sectors~\cite{jklindex}. For the toric code this number is four, and hence it follows that we have indeed construct all sectors: the ground state sector, and the ones corresponding to the pairs $(\iota, g)$, $(\sigma, e)$, $(\sigma,g)$, where $\sigma$ is the sign representation of $\mathbb{Z}_2$ and $g$ is the only non-trivial element in the group. This leads to the following conclusion:

\begin{theorem}
	The category of localised and transportable endomorphisms of the toric code is equivalent as a braided fusion category to category $\operatorname{Rep}_f \mc{D}(G)$ of finite dimensional representations of the quantum double $\mc{D}(\mathbb{Z}_2)$.
\end{theorem}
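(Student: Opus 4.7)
My plan is to build an explicit braided monoidal functor $F : \operatorname{Rep}_f \mc{D}(\mathbb{Z}_2) \to \Delta$ and show it is an equivalence. Since both categories are semisimple with finitely many simples, it suffices by a standard reconstruction argument to verify that $F$ is fully faithful on simples, essentially surjective on simples, and that its structural isomorphisms (tensorator and braiding) respect the corresponding structures in $\operatorname{Rep}_f \mc{D}(\mathbb{Z}_2)$.

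First I would enumerate the simples on both sides. On the representation side, because $\mathbb{Z}_2$ is abelian, the general recipe of Section~\ref{sec:qdouble} gives exactly four irreducibles of $\mc{D}(\mathbb{Z}_2)$, indexed by pairs $(\omega,c)$ with $\omega \in \widehat{\mathbb{Z}_2} = \{\iota,\sigma\}$ and $c \in \{e,g\}$. On the endomorphism side, the construction via semi-infinite ribbon operators~\eqref{eq:auto} already produced one localised and transportable automorphism $\alpha_{(\omega,c)}$ for each such pair, and we know these are pairwise inequivalent. The upper bound from the Jones--Kosaki--Longo index~\cite{jklindex} of the inclusion $\pi_0(\alg{A}(\Lambda_1 \cup \Lambda_2))'' \subset \pi_0(\alg{A}((\Lambda_1\cup\Lambda_2)^c))'$ with value $4$ then forces these four sectors to be \emph{all} irreducible sectors, so $F$ can be declared essentially surjective on simples by setting $F(\omega,c) := [\alpha_{(\omega,c)}]$.

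Next I would extend $F$ to the whole category by using the direct sum construction in $\Delta$ described in Section~\ref{sec:category}: a general object in $\operatorname{Rep}_f \mc{D}(\mathbb{Z}_2)$ is a finite direct sum of simples, and $\Delta$ is closed under direct sums via isometries $V_i$ in the cone factor $\mc{R}_\Lambda$. Since all four simples are invertible (i.e.\ $1$-dimensional as objects in the fusion sense), fully faithfulness on simples reduces to showing $\Hom(\alpha_{(\omega,c)}, \alpha_{(\omega',c')}) = \mathbb{C}\,\delta_{(\omega,c),(\omega',c')}$, which follows from inequivalence plus Schur's lemma applied to irreducible endomorphisms. The tensorator is then forced by the fusion rule $(\omega_1,c_1)\otimes(\omega_2,c_2) = (\omega_1\omega_2, c_1 c_2)$, which, as derived earlier from the multiplication $F_\xi^{\omega,c} F_\xi^{\chi,d} = F^{\omega\chi,cd}_\xi$, matches exactly the tensor product of $\mc{D}(\mathbb{Z}_2)$-representations. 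For the braiding, I would compute $\varepsilon_{\alpha_{(\omega,c)},\alpha_{(\sigma,d)}}$ directly from the definition $U^*\rho_1(U)$ with $U$ a ribbon-operator charge transporter taking $\alpha_{(\sigma,d)}$ to a cone to the ``right'' of the ribbon of $\alpha_{(\omega,c)}$; by invertibility this is a scalar, and the crossing commutation relation $F^{\omega,c}_{\xi_1} F^{\sigma,d}_{\xi_2} = \omega(d)\sigma(c)\, F^{\sigma,d}_{\xi_2} F^{\omega,c}_{\xi_1}$ recorded above produces precisely the scalar coming from the universal $R$-matrix $R = \sum_{g} e_g \otimes g$ of $\mc{D}(\mathbb{Z}_2)$ evaluated on $(\omega,c)\otimes(\sigma,d)$; squaring recovers the already-computed monodromy $\overline{\omega}(d)\overline{\sigma}(c)$, providing a consistency check.

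The main technical obstacle I anticipate is not the simple-level bookkeeping above but rather the global coherence of $F$: one must verify that the tensorator and the braiding isomorphisms assembled from charge transporters and crossing ribbon computations satisfy the hexagon equations strictly, and that they are independent of the auxiliary choices (cones, ribbons, isometries in $\mc{R}_\Lambda$) up to the prescribed natural isomorphisms. The independence of ribbon representative at fixed endpoints proved earlier, together with the freedom to conjugate any two charge transporters by a unitary in the cone algebra, makes each construction canonical up to a scalar on each simple summand, and because each irreducible sector is invertible these scalars can be absorbed into the choice of $V_i$'s. Once coherence is in place, fully faithfulness on all objects follows by additivity, and $F$ is a braided monoidal equivalence, proving the theorem.
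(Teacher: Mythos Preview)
Your proposal is correct and follows essentially the same route as the paper: the paper does not give a separate proof block for this theorem but presents it as the conclusion of Section~4, namely the explicit construction of the four automorphisms $\alpha_{(\omega,c)}$, their pairwise inequivalence, the fusion and braiding computations via ribbon operators, and the Jones--Kosaki--Longo index bound of~$4$ showing there are no further sectors. Your write-up makes the functorial packaging and coherence checks more explicit than the paper (which, as a review, simply asserts the equivalence once the ingredients are in place), but the underlying argument is the same.
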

It should be rather straightforward (but tedious) to extend this result to all abelian quantum double models.

There is one aspect that we have not mentioned so far. The category of the quantum double model is \emph{modular}. Modular tensor categories are fusion categories with an additional non-degeneracy property: it is called modular if it has trivial braided centre. That is, suppose that $\rho$ is irreducible, and $\varepsilon_{\rho,\sigma} \varepsilon_{\sigma,\rho} = I$ for all irreducible $\sigma$. If this implies that $\rho$ is the trivial object, then the category is modular~\cite{Mueger03,Rehren90}. In this sense, modular categories are ``as far away'' as possible from being symmetric, where $\varepsilon_{\rho,\sigma} \varepsilon_{\sigma,\rho}$ is always the identity.

There is an equivalent condition that is related to Verlinde's matrix $S$ in conformal field theory~\cite{Verlinde88}. In the categorical setting it can be defined as follows. One can show that the duality allows one to define a \emph{trace} on morphisms in the category. A matrix $S$ is than defined by having entries $S_{ij} = \operatorname{tr}(\varepsilon_{\rho_i, \rho_j} \varepsilon_{\rho_j, \rho_i})$, where $\rho_i$ is a set of representatives of the irreducible objects. A category is modular if and only if this matrix $S$ is invertible. It also allows us to explain the name modular: together with a matrix $T$ that can also be canonically defined, the matrices $S$ and $T$ induce a projective representation of the modular group $\mathrm{SL}_2(\mathbb{Z})$. In modular categories it turns out that there is in fact a relation between the matrix $S$ and the fusion coefficients, given by the Verlinde rule~\cite{Verlinde88}:
\[
	N_{ij}^k = \sum_{r} \frac{S_{ij} S_{jr} \overline{S}_{kr}}{S_{1r}^2},
\]
where the label $1$ stands for the trivial object. This shows that that the structure of a modular tensor category is quite rigid, and not any given fusion rule can be realized in some modular category.

Modular tensor categories can be realized as \emph{topological quantum field theories} (TQFTs), see for example~\cite{MR1292673}. One can therefore think, in a sense, of the type of topologically ordered systems that we have considered here as systems that in the low energy limit can be described by a TQFT\@. It should be noted than in general these are really effective theories, giving an effective description of oft-times very complex and poorly understood condensed matter systems. They also play a fundamental role in the field of topological quantum computing~\cite{Wang}, on which we comment briefly in the last section.

\section{Extension to non-abelian models}
So far we have only considered abelian models. A natural question is if the methods can be extended to \emph{non}-abelian quantum spin models. We again consider the quantum double model, but now for a non-abelian (but still finite) group $G$. In that case, there still is a unique translational invariant ground state $\omega_0$. Many of the proofs, however, do not directly carry over from the abelian case. The reason for this will become clear below, but the underlying difficulty is that the irreducible representations of $\mc{D}(G)$ are no longer all one-dimensional. In particular, it is not clear how one could construct endomorphisms describing these ``non-abelian'' charges.

A similar problem appeared in one-dimensional spin chains, with compactly localised charges. There the problem is that the algebra $\alg{A}(\Lambda)$ of such a localization region is finite dimensional, namely the tensor product of finitely many matrix algebras. All endomorphisms of this algebras are in fact automorphisms, and one can show that these cannot have non-abelian statistics. This problem can be circumvented by the methods of Szlach{\'a}nyi and Vecserny{\'e}s~\cite{MR1234107}, and Nill and Szlach{\'a}nyi~\cite{MR1463825}. Instead of looking at endomorphisms, they look at \emph{amplimorphisms}, that is, morphisms $\chi: \alg{A} \to M_n(\alg{A})$ for some integer $n$. One can then do a study of these endomorphisms in the spirit of the DHR theory.

In the present case we are interested in two dimensional systems with localization in \emph{cones}. The cone algebras are certainly not finite dimensional, so the obstruction of the only endomorphisms being automorphisms does not play a role here.\footnote{To be a bit more precise, one actually needs that $\pi_0(\alg{A}(\Lambda))''$ is not a factor of Type I, that is, not of the form $\alg{B}(\mc{H})$ for some Hilbert space $\mc{H}$. For the quantum double models that is the case.} Nevertheless, as mentioned it is not easy to explicitly construct examples of the endomorphisms. This is where the amplimorphisms come in. The strategy is to mimic the amplimorphism construction in~\cite{MR1234107} in the context of cone-localised charges, to construct representatives of the different charge classes, and will then show how we can go back to the usual setting of cone-localised endomorphisms. The amplimorphism description is much more explicit, making it possible to explicitly calculate intertwiners, fusion rules, \emph{et cetera}. Here we will mainly restrict to the construction of the different sectors. Combining the techniques developed in the abelian case with the amplimorphism results in~\cite{MR1234107} should enable one to completely solve the model.

The ribbon operators again play a fundamental role. We will use the same notation as in Section~\ref{sec:qdouble}, e.g. $r$ will always be a fixed representative in a conjugacy class. Let $\xi$ be a fixed ribbon, and recall that one can choose a basis of the ribbon operators acting on $\xi$ in terms of the irreducible representations of $\mc{D}(G)$. For a fixed representation there is a multiplet of ribbon operators $F^{IJ}_\xi$, where $I = (i_1, i_2)$ and $J = (j_1, j_2)$. The first indices run over the elements of the corresponding conjugacy class $C$, while the second runs over the dimension of the group. These multiplets satisfy certain completeness relations, namely
\begin{equation}
	\sum_{I} (F^{IJ}_\xi)^* F_{\xi}^{IK} = \delta_{JK} I, \quad \sum_{J} F^{IJ}_\xi (F^{KJ}_\xi)^* = \delta_{IK} I,
	\label{eq:multiplet}
\end{equation}
where the summations are over all pairs $(i_1, i_2)$ and the $I$ on the right hand side is the unit of $\alg{A}$. These equations can be verified by a simple calculation using the definitions and the algebraic relations for $F_{\xi}^{h,g}$. In~\cite{MR1234107} these $F^{IJ}$ are called \emph{irreducible} and \emph{complete} multiplets, with the difference that here we have not defined an action $\gamma_a$ of $\mc{D}(G)$ acting on these multiplets. Nevertheless, they do transform under an irreducible representation of $\mc{D}(G)$, cf.\ equation~(B69) of~\cite{PhysRevB.78.115421}.

Recall that for the abelian model, a ribbon operator acting on a large ribbon is just the product of two ribbon operators acting on smaller ribbons, but related to the same irreducible representation. If the irreducible representation is not one-dimensional any more, this is no longer true. Nevertheless, a suitable analogue of equation~\eqref{eq:ribbonop} still holds:
\begin{lemma}
\label{lem:multidecompose}
	Choose a pair $(C,\rho)$ of a conjugacy class and an irreducible representation of the centralizer $Z_G(r)$, where $r$ is as explained in Section~\ref{sec:qdouble}. Let $\xi = \xi_1 \xi_2$ be a ribbon that is decomposed into two ribbons. The corresponding multiplets will be denoted by $F_{\xi}^{IJ}$ and $F_{\xi_i}^{IJ}$ respectively. Then we have the following relation
\begin{equation}
	F_{\xi}^{IJ} = \sum_{K} F^{IK}_{\xi_1} F^{KJ}_{\xi_2},
	\label{eq:multidecom}
\end{equation}
where the sum is over all pairs $K = (k_1, k_2)$
\end{lemma}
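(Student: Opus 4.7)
The plan is to start from the definition
\[
F^{IJ}_\xi = \sum_{g \in Z_G(r)} \overline{\rho}_{jj'}(g)\, F_\xi^{\overline{c}_i,\; q_i g \overline{q}_{i'}}
\]
(with $I=(i,j)$, $J=(i',j')$), apply the elementary recursion \eqref{eq:ribbonop} to each $F_\xi^{h,g}$ on the right-hand side, and then reorganize the resulting triple sum as a product of two multiplets via a change-of-variables on the summation index $k\in G$. The recursion gives
\[
F_\xi^{\overline{c}_i,\, q_i g \overline{q}_{i'}} = \sum_{k \in G} F_{\xi_1}^{\overline{c}_i,\, k}\, F_{\xi_2}^{\overline{k}\,\overline{c}_i k,\; \overline{k} q_i g \overline{q}_{i'}},
\]
so everything hinges on recognizing which reparametrisation of $k$ makes the first argument of the $\xi_2$-operator land on some $\overline{c}_m$, as required for the multiplet basis.

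The key step is to parametrise $k$ uniquely as $k = q_i \overline{n}\,\overline{q}_m$ with $m \in \{1,\dots,|C|\}$ and $n \in Z_G(r)$; this is a bijection $G \leftrightarrow \{1,\dots,|C|\} \times Z_G(r)$ by the unique decomposition mentioned in Section~\ref{sec:qdouble}. With this choice one has $\overline{k} q_i = q_m n$, and a short computation using $n\overline{r}=\overline{r}n$ (since $n\in Z_G(r)$) gives
\[
\overline{k}\,\overline{c}_i k = q_m n \overline{r}\, \overline{n}\, \overline{q}_m = q_m \overline{r}\, \overline{q}_m = \overline{c}_m,
\]
so the $\xi_2$-ribbon operator in the recursion becomes $F_{\xi_2}^{\overline{c}_m,\, q_m n g \overline{q}_{i'}}$, exactly the shape appearing in the multiplet basis. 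I expect this single combinatorial identity (choosing the correct labeling so that conjugation transports the fixed representative $\overline{r}$ cleanly) to be the main obstacle; once it is spotted, everything else is bookkeeping.

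Substituting back, $F^{IJ}_\xi$ becomes a quadruple sum over $g \in Z_G(r)$, $m$, $n \in Z_G(r)$ of $\overline{\rho}_{jj'}(g)$ times $F_{\xi_1}^{\overline{c}_i,\, q_i \overline{n}\,\overline{q}_m}\, F_{\xi_2}^{\overline{c}_m,\, q_m n g \overline{q}_{i'}}$. I then change variables $g' := \overline{n}$, $g'' := ng$, which is a bijection of $Z_G(r) \times Z_G(r)$ onto itself satisfying $g' g'' = g$. Using the homomorphism property of $\rho$ in the form $\overline{\rho}_{jj'}(g' g'') = \sum_l \overline{\rho}_{jl}(g')\overline{\rho}_{lj'}(g'')$, the sum factorises as
\[
F^{IJ}_\xi = \sum_{m,l} \Bigl(\sum_{g'} \overline{\rho}_{jl}(g') F_{\xi_1}^{\overline{c}_i,\, q_i g' \overline{q}_m}\Bigr)\Bigl(\sum_{g''} \overline{\rho}_{lj'}(g'') F_{\xi_2}^{\overline{c}_m,\, q_m g'' \overline{q}_{i'}}\Bigr) = \sum_K F_{\xi_1}^{IK} F_{\xi_2}^{KJ},
\]
with $K=(m,l)$, which is the claimed identity. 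The proof reduces to (a) the right change of variables $k \mapsto (m,n)$, (b) the observation that $n$ commutes with $\overline{r}$, and (c) the matrix-product identity for $\overline{\rho}$; aside from (a), the computation is routine.
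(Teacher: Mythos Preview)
Your proof is correct and uses essentially the same ingredients as the paper: the recursion~\eqref{eq:ribbonop}, the unique decomposition $G \leftrightarrow Q_C \times Z_G(r)$, and the matrix-product identity for $\overline{\rho}$. The only difference is direction: the paper starts from the right-hand side $\sum_K F^{IK}_{\xi_1} F^{KJ}_{\xi_2}$, collapses the $k_2$-sum via the representation property, then trades the $(g,k_1)$-sum for a sum over $s\in G$ and recognises~\eqref{eq:ribbonop}; you start from $F^{IJ}_\xi$, expand via~\eqref{eq:ribbonop}, and then factorise using the same decomposition in reverse. The key combinatorial step --- your identity $\overline{k}\,\overline{c}_i k = \overline{c}_m$ under the parametrisation $k = q_i \overline{n}\,\overline{q}_m$ --- is exactly the paper's observation that $c_{k_1} = \overline{s}\, r s$ for $s = g\overline{q}_{k_1}$, read the other way.
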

\begin{proof}
First write out the right hand side of equation~\eqref{eq:multidecom} in terms of the elementary ribbon operators $F_{\xi_i}^{h,g}$, where we set $I = (i_1, i_2)$, and similarly for $J$ and $K$:
\[
\sum_{g,h \in Z_G(r)} \sum_{k_1 = 1}^{|C|} \sum_{k_2=1}^{\dim(\rho)} \overline{\rho}_{i_2 k_2}(g) \overline{\rho}_{k_2 j_2}(h) F_{\xi_1}^{\inv{c}_{i_1}, q_{i_1} g \inv{q}_{k_1}}  F_{\xi_2}^{\inv{c}_{k_1}, q_{k_1} h \inv{q}_{j_1}}.
\]
Since $\rho$ is a representation, the summation over $k_2$ yields a term $\overline{\rho}(gh)$. After a substitution $h \mapsto \inv{g} h$ we obtain
\[
\sum_{g,h \in Z_G(r)} \sum_{k_1 = 1}^{|C|} \overline{\rho}_{i_2 j_2}(h) F_{\xi_1}^{\inv{c}_{i_1}, q_{i_1} g \inv{q}_{k_1}}  F_{\xi_2}^{\inv{c}_{k_1}, q_{k_1} \inv{g} h \inv{q}_{j_1}}.
\]
As remarked before, every element $s \in G$ can be written uniquely in the form $s = n q_i$ for some $i$ and $n \in Z_G(r)$. Hence the summation over $k_1$ and $g$ can be replaced by a summation over $s \in G$. More precisely, we set $s = g \overline{q}_{k_1}$. Note that $c_{k_1} = q_{k_1} r \inv{q}_{k_1} = s r \inv{s}$. With this observation the expression above reduces to
\[
\sum_{h \in Z_G(r)} \sum_{s \in G} \overline{\rho}_{i_2 j_2}(h) F_{\xi_1}^{\inv{c}_{i_1},q_{i_1} s} F_{\xi_2}^{\inv{s}\,\inv{r} s, \inv{s} h \inv{q}_{j_1}} = F^{IJ}_{\xi},
\]
where the equality follows after a substitution $s \mapsto \inv{q}_i s$ and with the help of equation~\eqref{eq:ribbonop}. This completes the proof.
\end{proof}

Analogously to the automorphisms for the abelian models, we now define linear maps $\chi_{IJ}(A)$ of $\alg{A}$. Choose again a semi-infinite ribbon $\xi$ and let $F^{IJ}_{\xi_n}$ denote the corresponding multiplets, where $\xi_n$ is the first part of the ribbon, consisting of $n$ triangles. Then we set for any local observable $A$:
\begin{equation}
	\chi_{IJ}(A) := \lim_{n \to \infty} \sum_{K} F^{IK}_{\xi_n} A \left(F_{\xi_n}^{JK}\right)^*.
	\label{eq:amplicomponents}
\end{equation}
Note that (assuming for the moment that the expression converges) this defines a linear map defined on a dense subset of $\alg{A}$. Since it is bounded, it can be extended to $\alg{A}$. This extension will also be denoted by $\chi_{IJ}$. So it remains to be shown that the expression indeed converges (in norm, even). The main idea is similar as in the abelian case, only here we have to use Lemma~\ref{lem:multidecompose}.

\begin{lemma}
	\label{lem:amplimat}
	Let $\chi_{IJ}$ be as above. Then the limit on the right hand side of equation~\eqref{eq:amplicomponents} converges. We have the following properties:
\begin{enumerate}
	\item \label{it:local}for $A \in \alg{A}_{loc}$, $\chi_{IJ}(A) = \sum_{K} F_{\xi_N}^{IK} A \left(F_{\xi_N}^{JK}\right)^*$ for $N$ big enough;
	\item $\chi_{IJ}(I) = \delta_{IJ} I$;
	\item $\chi_{IJ}(A) = \delta_{IJ} A$ if $A$ is localised away from the ribbon;
	\item $\chi_{IJ}(AB) = \sum_{K} \chi_{IK}(A) \chi_{KJ}(B)$;
	\item $\chi_{IJ}(A)^* = \chi_{JI}(A^*)$.
\end{enumerate}
\end{lemma}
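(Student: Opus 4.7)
The plan is to first establish the stabilisation property~(1) for local observables by combining Lemma~\ref{lem:multidecompose} with the completeness relations~\eqref{eq:multiplet}; convergence of the limit on all of $\alg{A}$ will then follow from a uniform norm bound. Properties~(2)--(5) are essentially calculational once~(1) is in place.

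For~(1), fix $A \in \alg{A}_{loc}$ and choose $N$ so large that for every $n \geq N$ the tail $\xi' := \xi_n \setminus \xi_N$ has all its edges disjoint from $\supp(A)$ (possible because $\xi$ is semi-infinite and $A$ is local). Applying Lemma~\ref{lem:multidecompose} to $F^{IK}_{\xi_n}$ and $F^{JK}_{\xi_n}$ I would write
\[
S_n(A) := \sum_K F^{IK}_{\xi_n} A (F^{JK}_{\xi_n})^* = \sum_{L, L'} F^{IL}_{\xi_N}\Bigl(\sum_K F^{LK}_{\xi'} A (F^{L'K}_{\xi'})^*\Bigr)(F^{JL'}_{\xi_N})^*.
\]
Since $A$ commutes with every ribbon operator on $\xi'$, $A$ can be pulled out of the inner sum, and the second half of~\eqref{eq:multiplet} collapses it via $\sum_K F^{LK}_{\xi'}(F^{L'K}_{\xi'})^* = \delta_{LL'} I$. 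Hence $S_n(A) = \sum_L F^{IL}_{\xi_N} A (F^{JL}_{\xi_N})^*$ for all $n \geq N$, which is item~(1).

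Convergence on the whole of $\alg{A}$ I would obtain from a matrix-sandwich view of $S_n$: setting $V^{(n)}_I \colon \bigoplus_K \mc{H}_0 \to \mc{H}_0$ with row components $F^{IK}_{\xi_n}$, the second completeness relation gives $V^{(n)}_I (V^{(n)}_I)^* = I$, so $V^{(n)}_I$ is a co-isometry and $\|S_n(A)\| \leq \|A\|$ uniformly in $n$. Combining this bound with stabilisation on the norm-dense subalgebra $\alg{A}_{loc}$, a standard $\varepsilon/3$-argument shows that $\{S_n(A)\}$ is norm-Cauchy for every $A \in \alg{A}$, so the limit exists and $\chi_{IJ}$ is a bounded linear map on $\alg{A}$.

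The remaining items are direct consequences: for~(2), $\sum_K F^{IK}_{\xi_n} (F^{JK}_{\xi_n})^* = \delta_{IJ} I$ straight from~\eqref{eq:multiplet}; for~(3), $A$ commutes with every $F^{IK}_{\xi_n}$, leaving $A$ times the same completeness sum; for~(5), taking adjoints inside the finite sum turns $F^{IK}_{\xi_n} A (F^{JK}_{\xi_n})^*$ into $F^{JK}_{\xi_n} A^* (F^{IK}_{\xi_n})^*$; for~(4), I would apply~(1) to $A$, $B$ and $AB$ with a common $N$, and expand $\sum_K \chi_{IK}(A)\chi_{KJ}(B)$ to produce the inner factor $\sum_K (F^{KL}_{\xi_N})^* F^{KM}_{\xi_N}$, which by the first half of~\eqref{eq:multiplet} equals $\delta_{LM} I$, collapsing everything to $\chi_{IJ}(AB)$. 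The only real obstacle is the bookkeeping in~(1): one must verify carefully that Lemma~\ref{lem:multidecompose} is applied with the right index structure so that the sum over the tail indices meshes precisely with the completeness relation. Everything else is formal manipulation of~\eqref{eq:multiplet}.
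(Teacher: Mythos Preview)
Your proposal is correct and follows essentially the same route as the paper: establish the stabilisation property~(1) by decomposing $\xi_n = \xi_N \xi'$ via Lemma~\ref{lem:multidecompose}, commuting the local $A$ past the tail operators, and collapsing with the completeness relation~\eqref{eq:multiplet}; then extend to $\alg{A}$ by boundedness and read off~(2)--(5) from~\eqref{eq:multiplet}. Your co-isometry argument for the uniform bound and your explicit derivations of~(2)--(5) are more detailed than the paper's, which simply asserts that these ``follow straightforwardly'', but the underlying ideas are identical.
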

\begin{proof}
	We show the first property. The others then follow straightforwardly using orthogonality and completeness for the multiplets, as well as the definition of $\chi_{IJ}$. Consider $A \in \alg{A}_{loc}$. Let $N$ be such that $\supp(A) \cap (\xi_n \setminus \xi_N) = \emptyset$ for all $n \geq N$. The idea is to decompose the ribbon $\xi_n$ as $\xi_n = \xi_N \widehat{\xi}$, where $\widehat{\xi} = \xi_n \setminus \xi_N$. Let us now write $\chi_{IJ}^{n}(A)$ for $\sum_{K} F^{IK}_{\xi_n} A \left(F_{\xi_n}^{JK}\right)^*$, and set $\xi_1 = \xi_N$, $\xi_2 = (\xi_n \setminus \xi_N)$. By Lemma~\ref{lem:multidecompose}, locality, and equations~\eqref{eq:multiplet} we get
\[
\begin{split}
	\chi_{IJ}^n(A) &= \sum_{K} \sum_{L} \sum_{M} F_{\xi_1}^{I L} F_{\xi_2}^{L K} A (F_{\xi_1}^{J M} F_{\xi_2}^{M K})^* \\
				   &= \sum_{K} \sum_{L} \sum_{M} F_{\xi_1}^{I L} A (F_{\xi_1}^{J M})^* F_{\xi_2}^{L K} (F_{\xi_2}^{M K})^* \\
				   &= \sum_{L} \sum_{M} F_{\xi_1}^{I L} A (F_{\xi_1}^{J M})^* \delta_{LM} \\
				   &= \chi_{IJ}^N(A).
\end{split}
\]
From this it is clear that the limit in equation~\eqref{eq:amplicomponents} converges for operators $A \in \alg{A}_{loc}$. As mentioned the other properties are easy to verify.
\end{proof}

Note that the properties stated in the Lemma are precisely those that one needs to define an amplimorphism. Note that there are $n = |C| \dim(\rho)$ pairs $I = (i,j)$. We then define a map $\chi : \alg{A} \to M_n(\alg{A})$ by setting $[\chi(A)]_{IJ} = \chi_{IJ}(A)$. It follows that $\chi$ is an amplimorphism. In addition, it is localised in any cone $\Lambda$ that contains the ribbon $\xi$ used in the definition of $\chi_{IJ}$, in the sense that $\chi(A)$ is the matrix with entries $A$ on the diagonal and otherwise zeros, if $A \in \alg{A}(\Lambda^c)$. It remains to show that the amplimorphisms are transportable. Here we show that the charges can be transported over a \emph{finite} region.

\begin{lemma}
The amplimorphisms $\chi$ constructed above are transportable over finite distances.
\end{lemma}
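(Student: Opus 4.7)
The plan is to construct the charge transporter explicitly as a unitary matrix in $M_n(\alg{A})$ whose entries are ribbon operators on a finite ribbon connecting the two starting sites, and to verify the intertwining relation by combining the multiplet decomposition of Lemma~\ref{lem:multidecompose} with the completeness relations~\eqref{eq:multiplet}. Conceptually this is the amplimorphism version of the argument used for abelian charges, where one exploited that $F_\xi^{\omega,c}\Omega$ depends only on the endpoints of $\xi$; the role of that scalar equality is now taken over by the matrix identity $V^*V = VV^* = I_n$ supplied by~\eqref{eq:multiplet}.

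Concretely, let $\chi$ be built from a semi-infinite ribbon $\xi$ starting at a site $s$ and lying in a cone $\Lambda$, and let $s'$ be any site reachable from $s$ by a finite ribbon $\zeta$. I would set $\xi' := \zeta \xi$ and let $\chi'$ be the amplimorphism associated to $\xi'$ via~\eqref{eq:amplicomponents}; after possibly enlarging the ambient cone to some $\Lambda'$ containing both $\zeta$ and the relevant part of $\xi$, the same argument as in Lemma~\ref{lem:amplimat} shows $\chi'$ is a well-defined amplimorphism localized in $\Lambda'$. The natural candidate intertwiner is the matrix $V \in M_n(\alg{A})$ with entries $V_{IK} := F^{IK}_\zeta$; its unitarity $V^*V = VV^* = I_n$ is an immediate consequence of~\eqref{eq:multiplet} applied to $\zeta$.

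The main step is then to verify that $V \chi(A) = \chi'(A) V$ for all $A \in \alg{A}$. By continuity, multiplicativity (property~(4) of Lemma~\ref{lem:amplimat}), and the fact that the set of $A$ for which the identity holds is a $*$-subalgebra, it suffices to check it on single-edge generators. For such $A$, property~(1) of Lemma~\ref{lem:amplimat} lets one replace the limit defining $\chi'_{IJ}(A)$ by a finite sum over $\xi'_N = \zeta \xi_n$; substituting $F^{IM}_{\xi'_N} = \sum_K F^{IK}_\zeta F^{KM}_{\xi_n}$ from Lemma~\ref{lem:multidecompose} and using that $F_\zeta$-operators commute with operators supported on $\xi_n$, one can factor the $\zeta$-pieces outside the summation over $M$. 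The inner expression either collapses via the second relation in~\eqref{eq:multiplet} (when $\supp(A) \subset \zeta$) or reproduces $\chi_{KL}(A)$ (when $\supp(A)$ is disjoint from $\zeta$), and in both sub-cases the result matches the $(I,J)$-entry of $V \chi(A) V^*$ after invoking property~(3) of Lemma~\ref{lem:amplimat} when needed.

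The most delicate point is the sub-case $\supp(A) \subset \zeta$, where $A$ fails to commute with $V$ and one must verify the identity directly using completeness of the $\xi_n$-multiplets rather than by pushing $A$ past $V$. The essential limitation of the argument, and the reason the lemma only claims finite-distance transportability, is that $V$ belongs to $M_n(\alg{A})$ only because $\zeta$ is a finite ribbon; transporting a charge into a cone whose asymptotic direction differs from that of $\Lambda$ would require a limiting procedure forcing $V$ into a weak closure, and controlling that limit appears to require the non-abelian Haag duality conjectured in Section~\ref{sec:haagduality}.
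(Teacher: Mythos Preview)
Your approach is correct and is essentially the same as the paper's: define $V$ as the matrix with entries $F^{IJ}_\zeta$, check unitarity via~\eqref{eq:multiplet}, and verify the intertwining relation using Lemma~\ref{lem:multidecompose} together with property~(1) of Lemma~\ref{lem:amplimat}. The paper does precisely this (with $\widehat{\xi}$ in place of your $\zeta$), leaving the verification as a one-line remark.

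Your sub-case analysis, however, is unnecessary and obscures how short the argument really is. Once you write $\chi'_{IJ}(A) = \sum_M F^{IM}_{\zeta\xi_n} A (F^{JM}_{\zeta\xi_n})^*$ for $n$ large enough and expand each factor via Lemma~\ref{lem:multidecompose}, you obtain directly
\[
\chi'_{IJ}(A) = \sum_{K,L,M} F^{IK}_\zeta F^{KM}_{\xi_n} A (F^{LM}_{\xi_n})^* (F^{JL}_\zeta)^* = \sum_{K,L} F^{IK}_\zeta\, \chi_{KL}(A)\, (F^{JL}_\zeta)^* = (V\chi(A)V^*)_{IJ},
\]
with no commutation of $F_\zeta$-operators past anything, and no dependence on whether $\supp(A)$ meets $\zeta$. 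The factors already sit in the right order after taking the adjoint, and property~(1) of Lemma~\ref{lem:amplimat} is stated for arbitrary local $A$. So the case you flag as ``most delicate'' is in fact no harder than the generic one; you can drop the single-edge reduction and the case split entirely. Your closing remark on why only finite-distance transportability is claimed matches the paper's discussion following the lemma.
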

\begin{proof}
	Fix a semi-infinite ribbon $\xi$. We demonstrate how we can move the charge at the endpoint of the ribbon around. That is, let $\widehat{\xi}$ be a ribbon, such that $\widehat{\xi}\xi$ is again a semi-infinite ribbon. Define $V \in M_n(\alg{B}(\mc{H}))$, where $n = |C| \dim(\rho)$, by having entries $V_{IJ} = F_{\widehat{\xi}}^{IJ}$. From equations~\eqref{eq:multiplet} it follows that $V$ is unitary. The claim is that $V \chi(A) V^* = \widehat{\chi}(A)$, where the amplimorphism in the right hand side is defined with respect to the ribbon $\widehat{\xi}$. This can be verified for local operators $A$, by carrying out the matrix multiplication, and using Lemma~\ref{lem:multidecompose} together with~\ref{it:local}. of Lemma~\ref{lem:amplimat}. 
\end{proof}

The case of cone transportability is more complicated. Because of the lemma it is enough to consider two semi-infinite ribbons $\xi_1$ and $\xi_2$ starting at the same site. Let $\xi_{i,n}$ denote the ribbon consisting of the first $n$ triangles. For each $n$, choose a ribbon $\widehat{\xi}_n$ that goes from the endpoint of $\xi_{2,n}$ to the endpoint of $\xi_{1,n}$ in such a way that as $n$ goes to infinity, so does the distance of the ribbon to the endpoint of $\xi_1$. Now define a unitary operator $V_n = V_{\xi_{2,n} \widehat{\xi}_n} V_{\xi_{1,n}}^*$, where $V_{\xi_{i,n}}$ is the unitary obtained from the multiplet $F_{\xi_{i,n}}^{IJ}$. Now if $A$ is local, it follows that $V_n \chi_1(A) = \chi_2(A) V_n$ for all $n$ large enough. This can be seen by the argument in the proof of the above lemma.

This gives a uniformly bounded sequence of operators, since each of them is unitary. By the compactness of the unit ball in the weak operator topology, there is a subnet that converges to some operator $V$. Since multiplication on the right with a fixed $\chi(A)$ is weakly continuous, it follows that $V$ intertwines $\chi(A)$ and $\widehat{\chi}(A)$. The problem remains to show that $V$ is unitary. There are different ways that one might achieve this. For example, one could first try to show that $\chi$ and $\widehat{\chi}$ are irreducible, so that $\Hom(\chi,\widehat{\chi})$ must be either zero or one-dimensional. If $V$ is non-zero, it then follows that one can choose $V$ to be unitary. The other option is to realize both representations as the GNS representation of the same state. The vector state with an $\Omega$ in the first component (and otherwise zero) is a good candidate. By the independence of the ribbon operators on the exact choice of ribbon, this leads to the same state in both representations. If one can show that this vector is in fact cyclic, the proof is complete. We will leave this issue open for now.

It is natural to look at the amplimorphisms in the ground state representation, that is, look at $(\pi_0 \otimes \id) \circ \chi$. This amounts to applying the ground state representation to each matrix element. In fact, that is what we have been doing implicitly above. Note that by localization of the amplimorphisms, for observables outside the localization region, this representation looks like $n$ copies of the ground state representation. Hence it is natural to adapt the selection criterion~\eqref{eq:select} a bit to allow for this case. It turns out that in the end this does not really matter, and one can go back from the amplimorphism picture to cone-localised endomorphisms (or representations). This is the content of the next theorem.
\begin{theorem}
	Suppose that $\pi_0$ satisfies Haag duality for cones. Let $\pi$ be a representation of $\alg{A}$ and $n$ be a positive integer such that for some cone $\Lambda$, we have
\begin{equation}
	n \cdot \pi_0 \upharpoonright \alg{A}(\Lambda^c) \cong \pi \upharpoonright \alg{A}(\Lambda^c),
	\label{eq:nselect}
\end{equation}
where $n \cdot \pi_0$ is the direct sum of $n$ copies of the representation. Then the following hold:
\begin{enumerate}
	\item\label{it:ampli}There is an amplimorphism $\chi: \alg{A} \to M_n(\alg{A}^{\Lambda_a})$, localised in $\Lambda$, such that we have $(\pi_0 \otimes \id) \circ \chi \cong \pi$.
	\item\label{it:endo}There is a morphism $\rho: \alg{A} \to \alg{A}^{\Lambda_a}$ such that $\pi_0 \circ \rho \cong \pi$ when restricted to $\alg{A}$.
\end{enumerate}
Note that the unitary equivalence in the second point can be used to map intertwiners between amplimorphisms to intertwiners between morphisms. In particular, if $\chi$ is transportable, so is the corresponding morphism, and $\rho$ can be extended to an endomorphism of $\alg{A}^{\Lambda_a}$.
\end{theorem}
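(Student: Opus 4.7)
The plan is to imitate the standard DHR conversion of representations satisfying a selection criterion into endomorphisms, absorbing the multiplicity $n$ in~\eqref{eq:nselect} first into an $n \times n$ matrix structure (giving part~\ref{it:ampli}) and then into a Cuntz family of isometries inside a cone algebra (giving part~\ref{it:endo}).

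For part~\ref{it:ampli}, first I would use~\eqref{eq:nselect} to obtain a unitary $V \colon \mathbb{C}^n \otimes \mc{H}_0 \to \mc{H}_\pi$ implementing the equivalence $n \cdot \pi_0 \cong \pi$ on $\alg{A}(\Lambda^c)$, and set $\chi(A) := V^* \pi(A) V$. Viewed as an element of $M_n(\alg{B}(\mc{H}_0))$, this is tautologically a unital $*$-homomorphism, and for $A \in \alg{A}(\Lambda^c)$ the intertwining property of $V$ on $\alg{A}(\Lambda^c)$ forces $\chi(A) = I_n \otimes \pi_0(A)$, so $\chi$ is diagonal outside $\Lambda$. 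To land the matrix entries in an observable algebra I would run the Haag-duality argument of Section~\ref{sec:haagduality}: for any cone $\widehat{\Lambda} \supset \Lambda$ and any $A \in \alg{A}$, $B \in \alg{A}(\widehat{\Lambda}^c)$, locality in $\alg{A}$ gives $\chi(A)\chi(B) = \chi(B)\chi(A)$; but $\chi(B) = I_n \otimes \pi_0(B)$, so each matrix entry $\chi_{IJ}(A)$ commutes with all of $\pi_0(\alg{A}(\widehat{\Lambda}^c))$ and therefore lies in $\pi_0(\alg{A}(\widehat{\Lambda}))''$ by Haag duality. Passing to the Buchholz--Fredenhagen auxiliary algebra $\alg{A}^{\Lambda_a}$ as in~\cite{MR660538} makes $\chi$ into an amplimorphism $\alg{A} \to M_n(\alg{A}^{\Lambda_a})$ localised in $\Lambda$, and $V$ itself realises $\pi \cong (\pi_0 \otimes \id) \circ \chi$.

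For part~\ref{it:endo} I would absorb the matrix indices into a Cuntz family. Fix a cone $\widetilde{\Lambda}$ containing both $\Lambda$ and $\Lambda_a$. Since $\pi_0(\alg{A}(\widetilde{\Lambda}))''$ is an infinite factor (by the arguments behind the direct-sum construction in Section~\ref{sec:category}, adapted from~\cite{KMSW,Matsui}), there exist isometries $V_1, \dots, V_n \in \pi_0(\alg{A}(\widetilde{\Lambda}))''$ with $V_I^* V_J = \delta_{IJ} I$ and $\sum_I V_I V_I^* = I$. Define
\[
\rho(A) := \sum_{I,J} V_I \, \chi_{IJ}(A) \, V_J^* .
\]
Writing $\chi(AB) = \chi(A)\chi(B)$ out in matrix form and applying the Cuntz relations $V_J^* V_K = \delta_{JK} I$ directly yields $\rho(AB) = \rho(A)\rho(B)$, while $\rho(I) = \sum_I V_I V_I^* = I$ and $\rho(A^*) = \rho(A)^*$ are immediate, so $\rho$ is a unital $*$-homomorphism $\alg{A} \to \alg{A}^{\Lambda_a}$ localised in $\widetilde{\Lambda}$. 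The unitary intertwiner witnessing $\pi_0 \circ \rho \cong \pi$ is
\[
W \colon \mc{H}_0 \to \mc{H}_\pi, \qquad W\psi := \sum_I V \bigl( \ket{I} \otimes V_I^* \psi \bigr),
\]
whose unitarity and compatibility $\pi(A) W = W \pi_0(\rho(A))$ reduce to $\sum_I V_I V_I^* = I$ together with the defining identity $V^* \pi(A) V = \chi(A)$.

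I expect the main obstacle to be the factoriality and infiniteness of $\pi_0(\alg{A}(\widetilde{\Lambda}))''$ needed in part~\ref{it:endo}: in this spin-system setting it is not a priori automatic and must be imported from the operator-algebraic analysis of Keyl, Matsui, Schlingemann and Werner referenced in Section~\ref{sec:category}. The remainder is bookkeeping with Haag duality and the Cuntz relations. The addendum on transportability is then routine: for any intertwiner $T \in M_n(\alg{A}^{\Lambda_a})$ between two amplimorphisms $\chi_1, \chi_2$ with associated Cuntz families $\{V_I^{(k)}\}$, the operator $S := \sum_{I,J} V_I^{(2)} T_{IJ} V_J^{(1)*}$ intertwines the associated morphisms $\rho_1, \rho_2$, so charge transporters between amplimorphisms descend to charge transporters between the corresponding endomorphisms, and the extension of $\rho$ to $\alg{A}^{\Lambda_a}$ proceeds by the same commutant-plus-Haag-duality argument used in the abelian case.
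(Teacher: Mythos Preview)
Your proposal is correct and follows essentially the same approach as the paper's proof: conjugate $\pi$ by the unitary from~\eqref{eq:nselect} to get $\chi$, use Haag duality entrywise to land the matrix entries in a cone von Neumann algebra (hence in $\alg{A}^{\Lambda_a}$), and then compress the matrix indices with a Cuntz family of isometries in an infinite cone factor to obtain $\rho$. The only differences are cosmetic: the paper takes the Cuntz isometries directly in $\pi_0(\alg{A}(\Lambda))''$ rather than in a larger $\pi_0(\alg{A}(\widetilde{\Lambda}))''$, giving the slightly sharper localisation $\rho\upharpoonright\alg{A}(\Lambda^c)=\id$, and it writes the final intertwiner as a map $\bigoplus_i \mc{H}_0 \to \mc{H}_0$ establishing $\pi_0\circ\rho \cong (\pi_0\otimes\id)\circ\chi$ (then composing with $V$), whereas you go straight to $\pi$.
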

\begin{proof} (\ref{it:ampli}) Let $U: \mathcal{H}_\pi \to \bigoplus_{i=1}^n \mathcal{H}_0$ be the unitary setting up the equivalence and suppose that $A \in \alg{A}(\Lambda)$. For $A \in \alg{A}$, define $\chi(A) := U \pi(A) U^*$. Note that $\chi(A) \in \alg{B}(\bigoplus_{i=1}^n \mc{H}_0) \cong M_n(\alg{B}(\mc{H}_0))$, so that the matrix elements $\chi_{ij}(A) \in \alg{B}(\mc{H}_0)$. Now let $B \in \alg{A}(\Lambda^c)$ and $A \in \alg{A}(\Lambda)$. Note that from equation~\eqref{eq:nselect} it follows that $\chi(B) = \diag(\pi_0(B), \dots, \pi_0(B))$.
	
	Remark that $A$ and $B$ commute by locality,  so that $\chi(A)\chi(B) = \chi(AB) = \chi(B)\chi(A)$. Writing out the definitions and comparing matrices element wise, it follows that $\pi_0(B) \chi_{ij}(A) = \chi_{ij}(A) \pi_0(B)$. Hence $\chi_{ij}(A) \in \pi_0(\alg{A}(\Lambda^c))' = \pi_0(\alg{A}(\Lambda))''$, by Haag duality. Since the algebra on the right hand side is contained in the auxiliary algebra $\alg{A}^{\Lambda_a}$, and $\chi$ acts trivially on $\alg{A}(\Lambda^c)$, it follows that $\chi: \alg{A} \to M_n\left(\alg{A}^{\Lambda_a}\right)$ is an amplimorphism.

	(\ref{it:endo}) Since $\pi_0(\alg{A}(\Lambda))''$ is an infinite factor, one can find isometries $V_i$, $i=1, \dots n$ generating a Cuntz algebra~\cite{MR0467330}. That is, they satisfy $V_i^* V_j = \delta_{ij} I$ and $\sum_{i = 1}^n V_i V_i^* = I$. Write $\chi$ for the amplimorphism obtained in part~\eqref{it:ampli}, and $\chi_{ij}(A)$ for its matrix elements when evaluated in $\alg{A}$. Define a map $\rho : \alg{A} \to \alg{A}^{\Lambda_a}$ by
\[
	\rho(A) = \sum_{i,j = 1}^n V_i \chi_{ij}(A) V_j^*.
\]
Suppose that $A,B \in \alg{A}$. A straightforward calculation then shows
\begin{align*}
	\rho(A)\rho(B) &= \sum_{i,j,k,l} V_i \chi_{ij}(A) V_j^* V_k \chi_{kl}(B) V_l^* \\
	&= \sum_{i,j,l} V_i \chi_{ij}(A) \chi_{jl}(B) V_l^* \\
	&= \rho(AB).
\end{align*}
Moreover, $\rho(A^*) = \rho(A)^*$ since $\chi_{ij}(A^*) = \chi_{ji}(A)^*$. If $B \in \alg{A}(\Lambda^c)$, then $\chi_{ij}(B) = \delta_{ij} B$ and $B \in \alg{A}(\Lambda)'$, hence $\rho(B) = B$ and $\rho$ is localised in $\Lambda$.

Next we show that $\pi_0 \circ \rho$ is unitarily equivalent to $\pi_0 \otimes \chi$. To this end, identify $\mc{H}_0 \otimes \mathbb{C}^n$ with $\mc{H} = \bigoplus_{i=1}^n \mc{H}_0$. Define a map $U: \mc{H} \to \mc{H}_0$ by setting
\[
U (\psi_1 \oplus \dots \oplus \psi_n) = \sum_{i=1}^n V_i \psi_i.
\]
Using the properties of the $V_i$ it is easy to check that $U$ preserves the inner product (and hence is an isometry and well-defined). Since $\sum_{i=1}^n V_i V_i^* = I$ it also has dense range, hence $U$ is unitary.
\end{proof}

A few remarks are in order at this point. First of all, the condition of Haag duality is only used to obtain an amplimorphism from a representation satisfying equation~\eqref{eq:nselect}. Without Haag duality, one can still obtain an morphism $\rho$ such that $\pi_0 \circ \rho$ and $\pi_0 \otimes \chi$ are unitarily equivalent. However, we then have little control over the range of the morphism, and we cannot extend it to an endomorphism of the auxiliary algebra without any additional information. 

The theorem shows that in principle one can restrict to the study of localised and transportable endomorphisms. Nevertheless, it can be very helpful to look at amplimorphisms as well. One reason is that it may be easier to construct such amplimorphisms explicitly in concrete models, as we have done above. They also provide more information on the symmetries of the model. In particular, the vector space $\mathbb{C}^n$ in $\mc{H}_0 \otimes \mathbb{C}^n$ carries a representation of the symmetry algebra, through the symmetry transformations of the multiplets $F_\xi^{IJ}$. This plays an important role in the analysis of the superselection sectors in~\cite{MR1463825,MR1234107}. We expect that the methods used there to study the category of amplimorphisms (and hence, by the theorem above, the category of localised endomorphisms). In particular, to define fusion and braiding. This leads to the following conjecture:

\begin{conjecture}
Let $G$ be a finite group. Then the category $\Delta$ of localised and transportable endomorphisms of the quantum double model is equivalent to the representation category $\operatorname{Rep}_f \mathcal{D}(G)$ of the quantum double $\mc{D}(G)$.
\end{conjecture}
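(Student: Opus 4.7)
The plan is to bootstrap from the amplimorphism construction already established in the previous section and then to mimic, step by step, the analysis that worked in the abelian case. First, I would fix Haag duality for cones as a standing hypothesis (the preceding conjecture), so that the second part of the theorem above is available: every sector is representable by a cone-localised endomorphism of the auxiliary algebra $\alg{A}^{\Lambda_a}$. Then, for each pair $(C,\rho)$ with $C$ a conjugacy class of $G$ and $\rho \in \widehat{Z_G(r)}$, the amplimorphism $\chi_{C,\rho}$ built from the multiplets $F_\xi^{IJ}$ gives, via the theorem, a localised and transportable endomorphism $\rho_{C,\rho}$. The first task is to complete the open point left in the excerpt, namely to upgrade transportability along a ribbon extension to full cone-transportability: concretely, to show that the weak-operator limit $V$ of the net $V_n = V_{\xi_{2,n}\widehat{\xi}_n}V_{\xi_{1,n}}^*$ is unitary. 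I would do this by first showing that each $\chi_{C,\rho}$ is irreducible (so that $\mathrm{Hom}(\chi_1,\chi_2)$ is at most one-dimensional whenever both sit over the same $(C,\rho)$), which reduces the question to showing $V\neq 0$; this in turn follows from computing a single matrix element against the vector state $(\Omega, 0, \dots, 0)$ and invoking the ribbon-independence lemma for $F_\xi^{h,g}\Omega$.

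Next I would identify the fusion and braiding. Fusion is defined, as in the abelian case, by composition of endomorphisms; the technical novelty is that the irreducible decomposition of $\rho_{C_1,\rho_1}\circ \rho_{C_2,\rho_2}$ now involves multiplicities $N_{ij}^k>1$, and these have to be extracted using direct sums in $\Delta$, which exist because each $\pi_0(\alg{A}(\Lambda))''$ is an infinite factor (so Cuntz isometries are available). The key computational input is Lemma~\ref{lem:multidecompose}: applying it to the composition $\chi_{C_1,\rho_1}\circ\chi_{C_2,\rho_2}$ along a common ribbon, together with the Clebsch--Gordan decomposition of $\rho_1\otimes\rho_2$ in $\operatorname{Rep}_f\mathcal{D}(G)$, should yield exactly the known quantum-double fusion rules. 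Braiding is constructed as in the abelian case using the charge transporters; the statistics operator $\varepsilon_{\rho_1,\rho_2}=U^*\rho_1(U)$ has to be shown to reproduce the $R$-matrix of $\mathcal{D}(G)$. For this I would reduce the computation, via Lemma~\ref{lem:multidecompose} and naturality, to a commutation relation between two crossing ribbon multiplets $F_{\xi_1}^{IJ}F_{\xi_2}^{KL}$, analogous to the abelian relation $F_{\xi_1}^{\omega,c}F_{\xi_2}^{\sigma,d}=\omega(d)\sigma(c)F_{\xi_2}^{\sigma,d}F_{\xi_1}^{\omega,c}$ but now with matrix-valued coefficients encoding $R\in\mathcal{D}(G)\otimes\mathcal{D}(G)$.

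Having produced a braided tensor functor $F:\operatorname{Rep}_f\mathcal{D}(G)\to\Delta$ sending $(C,\rho)\mapsto\rho_{C,\rho}$, fully faithful on simples by construction, the remaining task is essentiality, i.e.\ that every sector is isomorphic to some $\rho_{C,\rho}$. For this I would extend the Jones--Kosaki--Longo index argument of the abelian case: bound the number of inequivalent sectors from above by the index $[\pi_0(\alg{A}((\Lambda_1\cup\Lambda_2)^c))'':\pi_0(\alg{A}(\Lambda_1\cup\Lambda_2))'']$ for two disjoint cones, then argue (as for rational conformal nets) that this index equals $\sum_{(C,\rho)}(\dim\rho_{C,\rho})^2 = |G|^2 = \dim\mathcal{D}(G)$. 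Combined with the $|G|^2$ sectors already constructed, this forces $F$ to be essentially surjective, hence a braided equivalence.

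The main obstacle, I expect, is not the categorical bookkeeping but the analytic input: proving the JKL index upper bound $|G|^2$ in the non-abelian case, since the computation in \cite{jklindex} uses commutativity of the ribbon operators in an essential way. A secondary obstacle is the cone-transportability argument sketched above, whose cleanest resolution probably requires establishing irreducibility of $\chi_{C,\rho}$ first, which itself uses a careful analysis of Wilson-loop-type charge measurement operators adapted from the abelian case to multiplets transforming in a non-trivial irrep of $\mathcal{D}(G)$.
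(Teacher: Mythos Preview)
The statement you are attempting to prove is presented in the paper as a \emph{conjecture}, not a theorem: the paper offers no proof, only the remark that the amplimorphism methods of Szlach{\'a}nyi--Vecserny{\'e}s and Nill--Szlach{\'a}nyi should allow one to define fusion and braiding, together with a consistency check for $\mc{D}(S_3)$ via computer algebra. There is therefore no proof in the paper to compare your attempt against.

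That said, your outline is a reasonable research programme and is in fact very close in spirit to what the paper sketches as the expected route: build the sectors from the amplimorphisms $\chi_{C,\rho}$, pass to endomorphisms via the theorem in the non-abelian section, and then reproduce the $\mc{D}(G)$ fusion and $R$-matrix from the ribbon calculus. You have also correctly identified the two places where the argument is genuinely incomplete --- cone-transportability (unitarity of the weak limit $V$) and the JKL index bound in the non-abelian setting --- which the paper itself flags as open. One small correction: the index does not directly bound the \emph{number} of sectors but rather the global dimension $\sum_k d_k^2$; in the abelian case these coincide because every $d_k=1$, but for non-abelian $G$ you must match $\sum_{(C,\rho)} d_{C,\rho}^2$ against the index, which is what your displayed formula actually does, so the slip is only in the surrounding prose. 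In short: your proposal is a plausible strategy, consistent with the paper's own expectations, but it is a programme with acknowledged gaps rather than a proof, and the paper does not claim otherwise.
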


In~\cite{phdthesis} we obtained through computer algebra the fusion rules of $\mc{D}(S_3)$ by taking the composition of certain positive maps. These results are consistent with the approach here, since the positive maps there are the traces of the amplimorphisms we have defined here. The composition of these positive maps yields the trace of the amplimorphism $\chi_1 \otimes \chi_2$, where the tensor product is defined as in~\cite{MR1234107}.

To study this conjecture it would be helpful to go from endomorphisms to amplimorphisms. An important question is also if there is a ``canonical'' way to obtain an amplimorphism from an endomorphism. Of course, there is always the ``trivial'' way, since and endomorphism $\rho$ gives rise to an amplimorphism $\widehat{\rho} : \alg{A} \to \alg{A} \otimes \mathbb{C}$. In general, let $T_1, \dots T_n \in \alg{B}(\mc{H})$ be such that $T_i^* T_j = \delta_{i,j} I$ and $ \sum_{k=1}^n T_k T_k^* = I$. Then it is easy to check that $[\chi(A)]_{ij} := T_i^* \rho(A) T_j$ defines an amplimorphism. Hence to have a meaningful equivalence between the category of amplimorphisms and of localised morphisms, we would have to impose some additional conditions on the amplimorphisms. One of them is that they transform in the right way, as explained above. To find this symmetry one could look for an analogue of the Doplicher-Roberts theorem~\cite{DR90}, which gives a group symmetry for bosonic/fermion sectors. In general, one could only expect a so-called weak Hopf-algebra symmetry~\cite{Ostrik}.

We have outlined here how one could proceed with an analysis of the superselection structure of the non-abelian quantum double model. Although the analysis is not complete, we hope that it is a helpful starting point. It should be remarked that here we have provided an explicit model whose ground space representation should lead to non-abelian charges. This should be contrasted with~\cite{MR1463825,MR1234107}, where the existence of such a representation is taken as an assumption.

\section{An application: topological quantum computing}
There are different reasons why there has been a huge interest in topologically ordered systems in recent years. One of the reasons is that they provide examples of new phases of matter, that go beyond the Landau paradigm of symmetry breaking. This is not just of theoretical interest -- these phases really exist in nature. There is also experimental evidence for the existence of quasi-particle excitations with anyonic excitations, see for example~\cite{MajoranaFermions}. A good theoretical understanding is therefore very welcome.

Here we focus on another aspect that has sparked the interest of the quantum computation community. One of the goals of quantum computation is to use the full power of quantum mechanics to solve computationally hard problems, for which a computation on a usual, classical computer is infeasible. To illustrate this one can think of a simple spin-1/2 quantum system, with Hilbert space $\mathbb{C}^2$. The dimension of $n$ copies of such a system scales as $2^n$, so if one wants to simulate the whole Hilbert space of a $n$-particle system, one quickly runs out of memory in a classical computer. The idea behind quantum computing, which goes back to Feynman~\cite{FeynmanQC}, is to use the laws of quantum mechanics to solve computationally complex problems, or even simulate other quantum systems.

This is not the place for a full-fledged introduction to quantum computing (for this Nielsen and Chuang's~\cite{MR1796805} is a good start), but let us summarise the main points. The ``quantum memory'' of a quantum computer is modelled by a (generally finite dimensional) Hilbert space, ususally of the form $\mathcal{H} = (\mathbb{C}^2)^{\otimes n}$ for some $n$. A computation then consists of three steps:
\begin{enumerate}
	\item Initialise the system in a known state;
	\item Perform a unitary operation on the system to implement the algorithm;
	\item Measure the result (and if necessary, repeat to get statistics).
\end{enumerate}
This encompasses classical computing. To see this, we can reformulate each computational problem in the calculation of a function $f: \{0,1\}^n \to \{0,1\}^n$, where for simplicity we take $f$ to be injective. Now let $\ket{0}, \ket{1}$ be a basis of $\mathbb{C}^2$. Then we can identify each $(x_i) \in \{0,1\}^n$ with a basis vector $\ket{x} := \ket{x_1} \otimes \cdots \otimes \ket{x_n}$. We can then define a map $U_f$ by $U_f \ket{x} = U_f \ket{f(x)}$, which is unitary because $f$ is injective. Hence we can initialize the system in a known state $\ket{x}$, apply $U_f$, and measure to learn something about $\ket{f(x)}$. This is where quantum mechanics comes in: it allows us to act with $U_f$ on a \emph{superposition}, say of the form $\frac{1}{\sqrt{2^n}} \sum_x \ket{x}$, so that we can learn something about \emph{all} values $f(x)$ in a \emph{single} operation. This is not possible on a classical computer.

Although the main idea is quite simple, it turns out to be very difficult to implement this in physical systems in a reliable and scalable way. This is were topological quantum computing comes in~\cite{MR1951039,MR2443722}. We have already briefly discussed the problem of storing quantum information over an extended period of time. Here we focus on the computation part, that is, processing this quantum information to implement an algorithm. As mentioned above, this amounts to acting with a certain unitary operator on the system. In practice this could work, for example, by coupling the system to some external magnetic field for a period of time, and let it evolve. The problem is that it generally is very difficult to exactly perform the unitary that you want, and not something slightly different (because the magnetic field is left on too long, for example). The idea is therefore to use ``topological'' operations to implement the necessary unitary. A small disturbance should not change the topological property, and hence have no effect on the computation. The braiding operation of two anyons for example is independent of the path that the anyons take (as long as the paths don't cross).

To use this idea, the quantum information on which we want to operate has to be encoded using anyons. Let us consider a non-abelian anyon $\rho$, which for the sake of simplicity we assume to be self-dual: $\rho = \overline{\rho}$. We can then take $n$ copies of it, that is, consider $\rho^{\otimes n}$. The idea is to create these by pulling pairs of them from the vacuum, which is possible because $\rho$ is its own anti-charge. Because the anyon is non-abelian, this can actually be done in different ways. More precisely, the possible states are described by $\Hom(\iota, \rho^{\otimes n})$, where $\iota$ is again the trivial charge. This vector space can be given the structure of a Hilbert space. It is this space that we use to encode the qubits in. It should be noted that in general $\Hom(\iota, \rho^{\otimes})$ does \emph{not} have a nice decomposition as the tensor product of $n$ copies of some Hilbert space. Nevertheless, one can embed the state space of a number of qubits into this space. The dimension of $\Hom(\iota, \rho^{\otimes n})$ grows exponentially in $n$.

Step 1 in the quantum computation scheme can now be accomplished by pulling charges from the vacuum in a suitable way. To implement unitary operations on the qubits, we note that there is a natural representation of the braid group $B_n$ on $\Hom(\iota, \rho^{\otimes n})$: if $T \in \Hom(\iota, \rho^{\otimes n})$, we can consider
\[
	\pi(b_i) T := (I \otimes \cdots \otimes \varepsilon_{\rho,\rho} \otimes \cdots \otimes I) \circ T,
\]
where $b_i$ is the generator of the braid group that swaps the $i$-th and $(i+1)$-th strand, while on the right hand side, the $\varepsilon_{\rho,\rho}$ term acts on the $i$-th and $(i+1)$-th tensor factors. Note that the result is again in $\Hom(\iota, \rho^{\otimes n})$, hence this gives a unitary rotation on the encoded qubits, and hence allows us to implement (a part of) a quantum computation. 

This is not the end of the story, because it is not clear if every unitary on the encoded qubits can be obtained in this way. It is enough for $\pi(B_n)$ to generate a dense subset of the unitaries. If this is the case for a certain anyon model, it is said to be \emph{universal}, because it means that each quantum algorithm can in principle be implemented on it. Kitaev's quantum double model is universal for a wide range of non-abelian groups~\cite{Mochon1,Mochon2}, but in general this is a rather special feature of a model. If a model is not universal, one can supplement the braiding operations with other, non-topological operations to be able to implement each unitary operation. Even if the model is universal, one still has to find out which combination of braidings one has to do to obtain a certain unitary. Luckily, this can be found efficiently using the Solovay-Kitaev theorem~\cite{MR1796805}. This is worked out explicitly for a simple anyon model, the so called Fibonacci model, in~\cite{Bonesteel}.

Finally, after the appropriate braidings have performed, it is time to measure the outcome. That is, we have to find out which state in $\Hom(\iota, \rho^{\otimes n})$ the system is in. This is done by fusing some of the anyons again. Recall that the fusion rules are of the form $\rho \otimes \rho = \sum_k N^k_{\rho\rho}\rho_k$. The integers $N^k_{\rho\rho}$ essentially says in how many ways the fusion of two $\rho$-anyons can lead to an anyon $\rho_k$. It are precisely these different ways to fuse $n$ anyons to the vacuum that label the states. Hence, by fusion some anyons and observing the outcome (which amounts to doing a charge measurement), we can obtain statistics on which state the system was in. The charge measurement is where it becomes important to have modularity of the category: this allows us to distinguish the charges by pulling pairs of charges from the vacuum, move one of the anyons around the region for which we want to determine the charge, and fuse to the vacuum again and observe if there is anything left or not.

It should be noted that while topological quantum computing has advantages with respect to the stability of the operations, there are also drawbacks. For example, in practice it may not be so easy to physically move the anyons around, especially over larger distances. This may be circumvented by measurement based quantum computation. There, the braiding operations are mimicked by doing a series of measurements~\cite{MBQC}. One could also restrict to more local operations, although this will mean that the anyon models are no longer universal~\cite{localtqc}. 

To conclude, we have seen that there is a rich class of so-called topologically ordered states, for which methods of local quantum physics provide useful tools to study such systems. Besides the possible applications to quantum computing that we have mentioned, there are also very interesting condensed matter and mathematical aspect related to such phases. Finally, also on the experimental side the field is very active. It would be good to see if the tools of local quantum physics can be further employed to advance progress in this multi-disciplinary field.
\\

\noindent\textbf{Acknowledgements:}
The author wishes to thank Courtney Brell for helpful comments and discussions and Leander Fiedler for collaboration on~\cite{haagdouble}. This work is supported by the Dutch Organisation for Scientific Research (NWO) through a Rubicon grant and partly through the EU project QFTCMPS and the cluster of excellence EXC 201 Quantum Engineering and Space-Time Research	

%
%
%

\end{document}